\title{Fairness and Incentive Compatibility via Percentage Fees}
\declaretheorem[style=italicized,numbered=no,name={Theorem\kern-.09em}]{theorem*}
\newcommand{\eps}{\varepsilon}
\newcommand{\stam}[1]{}
\newcommand{\R}{\mathbb{R}}
\newcommand{\poly}{\hbox{poly}}
\newcommand{\FF}{\mathbb F}
\newcommand{\E}{\mbox{\bf E}}
\newcommand{\cA}{{\mathcal A}}
\newcommand{\cD}{{\mathcal D}}
\newcommand{\cM}{{\mathcal M}}
\newcommand{\cR}{{\mathcal R}}
\newcommand{\cV}{{\mathcal V}}
\begin{document}

\ThCSthanks{We thank Rahul Deb, Vasilis Gkatzelis, Nima Haghpanah, and Rachel Kranton for helpful discussions and various pointers to the literature.\\
S.~Dobzinski is supported by ISF grant 2185/19. S.~Dobzinski and J.~Vondrák are supported by BSF-NSF grant (BSF number: 2021655, NSF number: 2127781). S.~Oren is supported by BSF grant 2018206 and ISF grant 2167/19.\\
A preliminary version of this article appeared in EC 2023~{\cite{DBLP:conf/sigecom/DobzinskiOV23}}.}

\maketitle

\begin{abstract}
We study incentive-compatible mechanisms that maximize the Nash Social Welfare. Since traditional incentive-compatible mechanisms cannot maximize the Nash Social Welfare even approximately, we propose changing the traditional model. Inspired by a widely used charging method (e.g., royalties, a lawyer that charges some percentage of possible future compensation), we suggest charging the players some percentage of their value of the outcome. We call this model the \emph{percentage fee} model.

We show that there is a mechanism that maximizes exactly the Nash Social Welfare in every setting with non-negative valuations. Moreover, we prove an analog of Roberts theorem that essentially says that if the valuations are non-negative, then the only implementable social choice functions are those that maximize weighted variants of the Nash Social Welfare. We develop polynomial time incentive compatible approximation algorithms for the Nash Social Welfare with subadditive valuations and prove some hardness results. 
\end{abstract}

\section{Introduction}

The field of Mechanism Design aims to develop and analyze algorithms for strategic players. In a typical scenario, we have a planner interested in implementing some social goal. The challenge is to design incentive-compatible mechanisms that achieve this social goal, despite the players behaving in a strategic way that might be misaligned with the desired social goal.

In this paper, we are interested in mechanisms that are dominant strategy incentive compatible when the values of the players are private information. As for the social goals -- many different ones are studied in the literature. However, largely speaking, it is fair to say that (with some very notable exceptions) the two most common and well-studied objectives are social welfare maximization \cite{Vic61,Cla71,Gro73} and revenue maximization \cite{Mye81}. These two social goals are very different: social welfare maximization (i.e., outputting an outcome that maximizes the sum of values of the players) is an objective that is defined for each instance, independently of the assumptions on the strategic behavior of the players. In contrast, revenue maximization is relative -- the quality of a mechanism is measured with respect to its closeness to the revenue of some ``ideal'' mechanism.

Taking a bird's-eye view, most will agree that welfare maximization is by far the most successful ``benchmark-free'' objective in the mechanism design literature and that good mechanisms for other ``benchmark-free'' objectives usually exist only for quite restricted settings (e.g., makespan minimization in the so-called ``single parameter'' settings). This grim situation is, of course, not due to the incapability of mechanism designers: it is possible to prove that incentive-compatible mechanisms can only achieve few objectives. In fact, Roberts theorem \cite{Rob79} tells us that social welfare maximization is unique in that in some settings, the set of implementable objective functions includes only slight variants of social welfare maximization.

Indeed, we have a good understanding of which social goals can and cannot be achieved by incentive-compatible mechanisms in many settings of interest. Obviously, being able to mathematically prove that incentive-compatible mechanisms are not powerful enough in some settings is of significant academic interest. Still, it is disappointing news from a practical perspective. Fairness is a case in point. In recent years we have seen a surge of interest in fairness. Notions such as EF1 (\cite{budish2011combinatorial} envy freeness up to one good), EFX (\cite{CKMPSW19} envy freeness up to any good), and the Nash Social Welfare (the product of the values of the players) have been extensively studied. Much of the work focused on existence and ``algorithmic'' issues: does a ``fair'' outcome exist in every instance? Can we find this outcome efficiently? Yet, even for the most extensively-studied fairness notions, we have no good mechanisms that implement them if the agents are strategic, except perhaps for relatively simple settings (e.g., \cite{BEF21,CFFKO11} ). {The work of Cole et al. \cite{CGG13} stands out as a notable exception and, in a sense, serves as the starting point of this study, as discussed later.} For example, in \cite{CFFKO11} a particular class which slightly generalizes additive valuations is shown to admit a mechanism that produces a truthful and envy-free solution, but it is shown that no such mechanism exists for slight generalizations of this class\footnote{An interesting singular exception is running the VCG mechanism when all players are unit demand: VCG outputs not only the welfare maximizing solution but also one that is envy-free \cite{L83}.}.

This paper attempts to bridge the chasm between incentives and fairness. We aim to design incentive-compatible mechanisms for one of the most prominent fairness promoting objectives, the Nash Social Welfare. Unfortunately, it is not hard to see that dominant strategy mechanisms cannot always output the allocation with the highest Nash Social Welfare, nor can they always output an allocation that provides a reasonable approximation to it\footnote{To see this, consider a dominant strategy algorithm for two players, Alice and Bob, and two items, $a$ and $b$. The valuation of both players is additive. Consider an instance where Alice's valuation is $1$ for item $a$ and $x$ for item $b$, and Bob's value is $x$ for $a$ and $x^3$ for item $b$. Suppose that $x>>1$. The only allocation that reasonably approximates the Nash social welfare is to give $a$ to Alice and $b$ to Bob. Now consider an instance where Alice's values are the same, but Bob's values are $t$ for item $a$ and $t+x^3$ for item $b$, $t>>x$. By weak monotonicity, Bob is always allocated item $b$, and possibly also item $a$. However, Bob cannot be allocated both items because then the Nash social welfare of the outcome will be $0$. Thus, the mechanism must output the allocation that gives Alice item $a$ and Bob item $b$. Note that the Nash social welfare of the allocation that gives Alice item $b$ and Bob item $a$ is bigger by a factor of $x$. Thus no dominant strategy mechanism can obtain a reasonable approximation to the Nash social welfare.}. 

In this paper we suggest to reconsider the traditional payment model. The taxation principle tells us that in the traditional model of mechanism design, each player is (essentially) facing a menu that sets a price for each possible alternative. This corresponds to a common real-life fee type known as a ``fixed fee''. However, another common fee type is the ``percentage fee''. Percentage fees might be used, e.g., by a lawyer who may charge the client a portion of the future compensation or
in a lease agreement of a retail store that commits to paying a percentage of its sales as rent. Royalties are another example for percentage fees. 

This paper's take-home message is that percentage fees are an efficient way of constructing fair incentive-compatible mechanisms. We view our results as a way of escaping the dead end that the traditional mechanism design model leads us to as far as implementing fairness notions is concerned. 

\subsubsection*{Our Model}

This work primarily studies fair dominant-strategy mechanisms in a combinatorial auction setup. However, the model is defined for the most general mechanism design setting, and some of our results also apply to this general model.

\paragraph{The setup.} In the most general setup, we have a set $N$ of players and a set $\cA$ of alternatives. Each player $i$ has a valuation function $v_i:\mathcal A \rightarrow \mathbb R$. The set of all possible valuation functions of player $i$ is denoted $\mathcal V_i$. A (direct) mechanism is composed of a social choice function $f:\mathcal V_1 \cdots \mathcal V_n\rightarrow \mathcal A$ and a payment function $p:\mathcal V_1 \cdots \mathcal V_n\rightarrow \mathbb R^n$. 

\paragraph{Incentive Compatibility.} Much of the mechanism design literature assumes that the profit of player $i$ in the instance $\overrightarrow v=(v_1,\ldots, v_n)$ is $v_i(f(\overrightarrow v))-p_i(\overrightarrow v)$ and looks for dominant strategy mechanism given this definition of profit. I.e., in a dominant strategy mechanism, for each player $i$, valuations $v_i,v'_i\in \mathcal V_i$ and valuations of the other players $v_{-i}$ it holds that
\begin{equation}\label{eq-traditional-ic}
v_i(f(v_i,v_{-i}))-p_i(v_i,v_{-i}) \geq v_i(f(v'_i,v_{-i}))-p_i(v'_i,v_{-i})
\end{equation}
We refer to this model as the \emph{traditional} model and refer to mechanisms where each player has a dominant strategy as in Equation (\ref{eq-traditional-ic}) as incentive compatible in the traditional model.

In contrast, in the \emph{percentage fee} model each player $i$ is charged a fraction of his value: $p_i(\overrightarrow v)\cdot v_i(f(\overrightarrow v))$. Thus, the profit of player $i$ in the instance $\overrightarrow v=(v_1,\ldots, v_n)$ is $v_i(f(\overrightarrow v))\cdot (1-p_i(\overrightarrow v))$. We are also interested in dominant strategy mechanisms in this model, but note that the definition of dominant strategy now considers the new profit model. That is, for each player $i$, valuations $v_i,v'_i\in \mathcal V_i$ and valuations of the other players $v_{-i}$ it holds that
\begin{equation}\label{eq-percentage-ic}
(1-p_i(v_i,v_{-i}))\cdot v_i(f(v_i,v_{-i})) \geq (1-p_i(v'_i,v_{-i}))\cdot v_i(f(v'_i,v_{-i}))
\end{equation}
We refer to mechanisms where each player has a dominant strategy as in Equation (\ref{eq-percentage-ic}) as incentive compatible in the percentage fee model.

We say that $f$ is \emph{implementable} (in the percentage fee model or in the traditional model) if there exists a payment function $p$ such that $(f,p)$ is an incentive compatible mechanism (in the relevant model).

\paragraph{Individual rationality.} We will mainly be interested in individually rational mechanisms with no positive transfers. In the percentage-fee model, this means that for every $i$, $p_i(\cdot)$ takes values in $[0,1)$ only\footnote{We do not allow $p_i(\cdot)=1$ to rule out trivial mechanisms that make little economic sense, like setting $p_i(\vec v)=1$ for every player $i$ in instance $\vec{v}$. In this mechanism the profit of all players is always $0$ (since if alternative $a$ is chosen the mechanism charges each player $i$ its full value $v_i(a)$), so pairing this payment function with any allocation function will give an incentive compatible mechanism.}. We stress that the players still have quasilinear utilities as before. I.e., they want to maximize their value for the selected alternative minus the payment. 

\paragraph{Nash social welfare.} Our main focus in this paper is maximizing the Nash Social Welfare in combinatorial auctions by dominant strategy mechanisms in the percentage fee model. In a combinatorial auction, we have a set $N$ of players ($|N|=n)$ and a set $M$ of heterogeneous items ($|M|=m$). Each player $i$ has a valuation function $v_i:2^M\rightarrow \mathbb R$ that gives a value for each possible subset of the items. We assume that each valuation function $v$ is non-decreasing and $v(\emptyset)=0$. One common goal is to maximize the social welfare of the output allocation $(S_1,\ldots, S_n)$: $\Sigma_iv_i(S_i)$. In this paper, our primary focus is to maximize the Nash Social Welfare (NSW), where the NSW of an allocation $S_1,\ldots, S_n$ is the {\em geometric mean} of the valuations $(\prod_{i=1}^{n} v_i(S_i))^{1/n}$.

\paragraph{Valuation classes.} We consider several standard classes of valuations in this paper: 
\begin{itemize}
\item A valuation $v$ is \emph{additive} if for every bundle $S$, $v(S)=\Sigma_{j\in S}v(\{j\})$.
\item $v$ is $\emph{XOS}$ if there exists additive functions $a_1,\ldots, a_t$ such that for every bundle $S$, it holds that $v(S)=\max_{1\leq j\leq t}a_j(S)$.
\item $v$ is \emph{subadditive} if for every $S,T$, $v(S)+v(T) \geq v(S\cup T)$. 
\end{itemize}

It is well known and easy to see that every additive valuation is also XOS, and that every XOS valuation is also subadditive.

\subsubsection*{Applicability of Mechanisms with Percentage Fees}

Mechanisms with percentage fees are applicable whenever the auctioneer can learn how much a player values the \emph{selected} outcome. We stress that 
the auctioneer does not necessarily learn

the values of other unrealized outcomes. For example, one can auction a license for the right to use some asset in exchange for a fraction of the future revenue, where the latter can be verified, e.g., by official tax returns. Our mechanisms are particularly attractive when the auction goal is to maximize fairness. One extreme example would be land reforms, where (typically) agricultural land is redistributed by the government to maximize both efficiency and equality. Land reforms also take a less radical form: In some countries, e.g., Israel, it is not uncommon to repartition land rights and move land from a ``strong'' municipality to a neighboring, economically weaker one. In the US, mechanisms that are used to redistribute resources to promote equality include gaming rights for native American tribes. Similarly, our mechanisms are applicable when resources are allocated internally within a large organization or corporation, where the management can evaluate the value of the allocated resources for the winning units.

Our work is related to the work on contingent payments (see \cite{Hansen85, DM2014} and the survey \cite{S2013}, among others). A typical scenario in this research direction includes multiple firms competing on acquiring a target firm, where the seller of the target firm receives, for example, a combination of cash and some percentage of the merged firm.

Similarly, \emph{sharecropping} is a common type of legal agreement that can be seen as a mechanism with percentage fees. In sharecropping, the landowner allows the use of the land in return to a share of the crop. Thus, assuming that the tenant's utility is linear in the amount of produced crop, the landowner can charge a percentage fee even without knowing the precise value of the tenant or the crop. Sharecropping was extensively studied, in particular in the economic literature. See, e.g., an influential paper by Stiglitz \cite{stiglitz74}.

The mechanisms we design aim to (approximately) maximize the product of the values (the Nash social welfare), rather than the product of the profits. While mechanisms that maximize the product of the profits might sometimes be considered ``fairer,'' focusing on mechanisms for Nash social welfare aligns more closely with the previous work in mechanism design which studies implementability of allocation functions in different settings (as mentioned above). Furthermore, it is quite common to evaluate the fairness of an algorithm based on its final allocational output without considering the players' monetary holdings. Consequently, changes in the players' monetary resources should not typically affect the perceived fairness of the allocation of goods.

\subsubsection*{Our Results I: Incentive Compatible Mechanisms that Maximize the NSW}

The Nash Social Welfare has been heavily studied recently in Algorithmic Game Theory. Its game theoretic properties have been analyzed (e.g., \cite{CKMPSW19}) and the possible approximation ratios achievable in different settings have been studied (e.g., \cite{CV15,AOSS17,BKKV18,LV22}). Unfortunately, as discussed earlier, no dominant strategy mechanisms can maximize the NSW in the traditional model. In the percentage fee model, we observe the following in Subsection \ref{subsec-exact-nash}:

\begin{theorem*}
In the percentage fee model, the social choice function that selects an alternative that maximizes the Nash Social Welfare is implementable as long as all valuations are positive or all valuations are non-negative and there is a ``null'' alternative with value $0$ for all players. 
\end{theorem*}

{This result can be seen as a generalization of an earlier mechanism of Cole et al. \cite{CGG13}, see below for a discussion.}

We have that just as the VCG mechanism is always applicable in the traditional model, maximizing the NSW is  possible in the percentage fee model. Recall that in the traditional model, the VCG mechanism is unique as Roberts theorem tells us that if the valuations are unrestricted, then the only implementable social choice functions are weighted variations of maximizing the social welfare. In Subsection \ref{subsec-analog} we prove that in the percentage fee model, only Nash Social Welfare maximization variants are implementable.

\begin{theorem*}
Let $\mathcal M$ be an incentive compatible mechanism in the percentage fee model when the valuations are positive but otherwise unrestricted. Suppose that the size of the image of the allocation function of $\cM$ is at least $3$. Then, there exist constants $\alpha_a$ (for each $a\in \mathcal A$) and $\beta_i$ (for each player $i$) such that in every instance $(v_1,\ldots, v_n)$ the allocation function of $\cM$ outputs an alternative that belongs to $\arg\max_{a\in \mathcal A}\alpha_a\cdot \prod_iv_i(a)^{\beta_i}$. 
\end{theorem*}

 We do not prove this theorem directly but instead present a reduction from Roberts theorem. The reduction relies on a simple yet powerful observation. 
Given a positive valuation $v$, let the valuation ${\tt log} v:2^M \to \R$ be defined as ${\tt log} v(S) = \log v(S)$. Now, let $\cV$ be a class of non-negative valuations and let ${\tt log} \cV$ denote the class of valuations: $ {\tt log} \cV = \{ {\tt log} v: v \in \cV \}$. We show a one-to-one and onto correspondence between mechanisms in the traditional model when each valuation is in ${\tt log} \cV$ and mechanisms in the percentage fee model when each valuation is in $\cV$. Note that if $\cV$ is the set of unrestricted positive valuations, then ${\tt log} \cV$ is the set of unrestricted valuations. Thus, Roberts theorem applies to the set of implementable social choice functions in the traditional model when each valuation is in ${\tt log} \cV$. We use this to characterize the set of implementable social choice functions when the valuations are in $\cV$ in the percentage fee model. 

The inspiration for our work comes from Cole et al. \cite{CGG13}, which is the first paper to observe a connection between implementability of the Nash social welfare and logV. They construct a mechanism that maximizes the Nash Social Welfare. Their mechanism can be seen as an implementation of the VCG mechanism in $ {\tt log} \cV$. In their setup the valuations are homogeneous, so this implementation does not require money, similarly to the sharecropping setting discussed above. {Note that although the mechanism of Cole et al. \cite{CGG13} works in the more restrictive setting of homogeneous valuations, it does not require monetary transfers at all. Thus, our mechanism can be seen as pushing their mechanism to its limit, by noticing that the mechanism of Cole et al. \cite{CGG13} can be extended to more general settings, if monetary transfers of a particular form are allowed}.

We also study the set of implementable functions in the single parameter setting. We focus on binary single parameter domains where  for each player $i$, the set of alternatives $\mathcal A$ is divided into ``winning'' alternatives $\mathcal W_i$ and ``losing'' alternatives $\mathcal L_i$. For every player $i$ and valuation $v_i\in \mathcal V_i$ there is a value $h_{v_i}$ such that $v_i(a)=h_{v_i}$, for all $a\in \mathcal W_i$. For every player $i$, there exists a value $l_i>0$ such that for every valuation $v_i\in \mathcal V_i$ and alternative $a\in \mathcal L_i$, $v_i(a)=l_i$. Unlike implementability in rich domains, in single-parameter settings we get that the set of implementable allocation functions is identical in the percentage fee model and in the traditional model (Subsection \ref{subsec-single-parameter}):

\begin{theorem*}
Let $f$ be an allocation function when the domains of all players are binary single parameter. Then, $f$ is implementable in the percentage fee model if and only if $f$ is monotone\footnote{Recall that $f$ is monotone for player $i$ if for each $v_i, v_{-i}$ for which $i$ wins in the instance $(v_i,v_{-i})$, $i$ also wins in the instance $(v'_i,v_{-i})$ when $v'_i>v_i$.} for each player $i$.    
\end{theorem*}

The characterization results that we provide in this paper are similar in spirit to a paper by Deb and Mishra \cite{DM2014} on implementability in domains with contingent payments. However, the set of allowed payments in \cite{DM2014} is different than ours as well as some of their assumptions (e.g., their ``binary independence'' assumption). Thus, the characterization results of neither paper imply the other. Furthermore, the techniques used to obtain the characterization results vary significantly.

\subsubsection*{Our Results II: Computationally Efficient Approximation Mechanisms}

We have seen that maximizing the Nash Social Welfare is possible with percentage fees. However, maximizing the Nash Social Welfare is NP-hard even in very simple settings, e.g., when the valuations are additive. Thus, much work has focused on developing approximation algorithms for the NSW, e.g., a constant approximation for combinatorial auctions with submodular valuations \cite{LV22,GHLVV22}.

One could hope that the correspondence that is used to obtain the analog of Roberts theorem would enable the ``automatic conversion'' of every computationally efficient approximation mechanism in the traditional model to an approximation mechanism in the percentage fee model with a comparable approximation ratio. Unfortunately, the correspondence does not preserve the approximation ratio. We thus must develop new computationally efficient and incentive-compatible approximation mechanisms for the percentage fee model.

One obstacle in constructing good incentive compatible mechanisms is that simple mechanisms for maximizing the social welfare do not provide any reasonable approximation ratio for the Nash Social Welfare. For example, both the mechanism that allocates the grand bundle to the player that values it the most and the mechanism that randomly allocate the items provide an $n$ approximation to the social welfare, but the first mechanism provides no approximation to the NSW and a random allocation might output an instance with non-zero NSW with exponentially small probability\footnote{Suppose we have $n$ players and $n$ items. Each player is interested only in one unique item for a value of $1$ and the rest for a value of $0$. The only allocation that gives a positive NSW is the one that gives each player his unique item.}. To overcome this obstacle, we ``derandomize'' this random allocation mechanism and get a deterministic mechanism that runs in polynomial time when the number of players $n$ is constant (Subsection \ref{subsec-approx-constant-number}):

\begin{theorem*}
Consider a combinatorial auction with $m$ items and $n$ players with XOS valuations. There is a deterministic mechanism in the percentage fee model that guarantees an approximation ratio of $(1+\epsilon) n$, for any constant $\epsilon>0$. If the valuations are subadditive, then the approximation ratio is $O(n \log m)$. The running time of the mechanism is $O(m^n)$ (i.e., $\poly(m)$ for a constant number of players).
\end{theorem*}

The mechanism enumerates over the support of an $n$-wise independent distribution, where at least one of them provides a good approximation to the NSW. We develop the percentage fee analog of (traditional) maximal-in-range mechanisms to prove incentive compatibility.

Maximal-in-range mechanisms were heavily studied in the traditional model as a way of obtaining computationally efficient dominant strategy mechanisms \cite{DN07a,DN07b,BDFK10,DSS15}. In a maximal-in-range mechanism, there is a fixed set of allocations (independent of the input) and the mechanism finds the welfare-maximizing allocation in the range. If the welfare-maximizing allocation in the range always guarantees a good approximation ratio and can be efficiently found, then applying the VCG mechanism (with respect to the restricted range) ensures incentive compatibility. Similarly, for maximizing the NSW in the percentage fee model, we identify a restricted range in which the best allocation in the range always has a high NSW. The range structure will be simple enough so that the best allocation can be efficiently found. Incentive compatibility of maximal-in-range mechanisms in the percentage fee model is proved similarly to proving the incentive compatibility of maximizing the NSW that was discussed above.

\begin{theorem*}
Consider a combinatorial auction with $m$ items and $n$ players with subadditive valuations. There is an incentive compatible $O(\min(n^2,\frac m n))=O(m^{\frac 2 3})$-approximation mechanism in the percentage fee model. The mechanism makes $\poly(m,n)$ value queries and runs in $\poly(m,n)$ time.
\end{theorem*}

This mechanism can be found in Subsection \ref{subsec-approx-m23}. Note that in this setting, the best known approximation algorithm (which is not incentive compatible) is that of \cite{BKKN21} which provides an approximation ratio of $n^{\frac {53} {54}}$, but this algorithm uses demand queries. Using value queries, there was a known $O(n)$-approximation \cite{GKK20,BBKS20}, and in terms of the number of items the best approximation ratio that one can hope for algorithms for subadditive valuations that use polynomially many value queries, even ignoring incentives, is $O(\sqrt m)$ \cite{DNS05-journal}\footnote{Formally, the result of \cite{DNS05-journal} applies to maximizing the social welfare, not the NSW. However, if in a lower bound proof for the social welfare, the optimal welfare maximizing solution gives all players approximately the same value, then the arithmetic/geometric mean inequality suffices to claim the same bound for NSW. This is the case for this particular lower bound proof, and in fact all lower bounds for the social welfare in all oracle models that we are aware of have this property.}. In Subsection \ref{subsec-approx-m12} we match this approximation ratio with incentive-compatible mechanism with polynomially many value queries but with an exponential running time:

\begin{theorem*}
Consider a combinatorial auction with $m$ items and $n$ players with subadditive valuations. There is an incentive compatible maximal-in-range mechanism in the percentage fee model that guarantees an approximation ratio of $\tilde O(\min(n,\frac m n))=\tilde O(m^{\frac 1 2})$. The mechanism uses $\poly(m,n)$ value queries and $2^n \poly(m)$ running time.
\end{theorem*}

Our last result (Subsection \ref{subsec-hardness}) shows that even for a constant number of players $n$, no polynomial time maximal-in-range mechanism can guarantee an approximation ratio better than $n$. This is tight, considering the maximal-in-range mechanisms discussed above. We also note that without incentive-compatibility, a constant factor of $1-1/e-\epsilon$ can be achieved for NSW with any constant number of players with submodular valuations \cite{GKK20}.

\begin{theorem*}
Let $\cM$ be a maximal-in-range mechanism for $n$ players with valuations from a class that includes additive and valuations. Suppose that $\cM$ guarantees an approximation ratio of $1/n+\eps$, for some constant $\eps>0$. Then, $\cM$ does not run in polynomial time, unless $NP\subseteq P/\poly$.
\end{theorem*}

\section{Implementability in the Percentage Fee Model}

In this section we will see that maximizing the Nash Social Welfare is possible with percentage fees. We will then prove an analog of Roberts theorem by showing that if the valuations are positive but unrestricted, the only set of implementable social choice functions in the percentage fee model are those that maximize weighted versions of the Nash Social Welfare. Finally, we study single-parameter mechanisms in the percentage fee model.

\subsection{Maximizing the Nash Social Welfare}\label{subsec-exact-nash}

\begin{theorem}[Nash Social Welfare Maximization]
\label{thm:VCG}
Consider a domain $\mathcal V$ where all valuations are positive or all valuations are non-negative and there is an alternative $null$ such that $v_i(null)=0$ for every player $i$. Let $f$ be a social choice function defined on $\mathcal V^n$ that selects an alternative that maximizes the Nash Social Welfare. Suppose that when the optimal Nash Social Welfare is $0$, $f$ selects the alternative $null$. Then, $f$ is implementable in the percentage fee model.

\end{theorem}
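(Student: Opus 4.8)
The plan is to write down explicit percentage‑fee payments and verify dominant‑strategy incentive compatibility by hand, guided by the paper's own observation that maximizing the Nash Social Welfare is ``VCG in the logarithmic domain.'' For an instance $\vec v$ whose optimal Nash Social Welfare is positive, let $a^*=f(\vec v)$ and define
\[
1-p_i(\vec v)\;=\;\frac{\prod_{j\neq i}v_j(a^*)}{\max_{a\in\mathcal A}\prod_{j\neq i}v_j(a)}.
\]
When the optimal Nash Social Welfare is $0$ (so $f$ outputs $null$), set $p_i(\vec v)=0$; any value in $[0,1)$ works there, since every player's value at $null$ is $0$. Up to the exponential, this is precisely the Clarke‑pivot payment of the VCG mechanism run on the valuations $\log v_i$ (maximizing $\sum_i\log v_i(a)$ is the same as maximizing $\prod_i v_i(a)$), so on the all‑positive part of the domain the theorem is morally a corollary of the traditional‑model/percentage‑fee correspondence developed later in the paper; I would nonetheless argue directly so that the non‑negative case is handled in one stroke.

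First I would check feasibility. Because $a^*$ maximizes $\prod_j v_j(a)$, when the optimum is positive we have $\prod_{j\neq i}v_j(a^*)>0$, hence $p_i(\vec v)<1$; and $\prod_{j\neq i}v_j(a^*)\le\max_a\prod_{j\neq i}v_j(a)$ gives $p_i(\vec v)\ge 0$. So $p_i$ always lies in $[0,1)$, and the resulting mechanism is individually rational with no positive transfers. Next, for the incentive argument, fix a player $i$, the others' reports $v_{-i}$, $i$'s true valuation $v_i$, and a misreport $v_i'$; write $P_{-i}(a)=\prod_{j\neq i}v_j(a)$ and $M_{-i}=\max_a P_{-i}(a)$, and note that $M_{-i}$ does not depend on $i$'s report. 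If $i$ reports truthfully and the optimal Nash Social Welfare is positive, then $a^*=f(v_i,v_{-i})$ maximizes $v_i(a)P_{-i}(a)$, so $i$'s utility equals $v_i(a^*)P_{-i}(a^*)/M_{-i}=\bigl(\max_a v_i(a)P_{-i}(a)\bigr)/M_{-i}$. Any misreport that leads $f$ to choose an alternative $a'$ with $P_{-i}(a')>0$ then yields $i$ a utility of $v_i(a')P_{-i}(a')/M_{-i}\le\bigl(\max_a v_i(a)P_{-i}(a)\bigr)/M_{-i}$, no improvement; and a misreport leading $f$ to choose $null$ gives $i$ value $0$, also no improvement.

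The remaining — and most delicate — case is when the optimal Nash Social Welfare at the truthful profile is $0$: then $i$'s truthful utility is $0$, and I would argue that no misreport can do better. If a misreport $v_i'$ makes $f$ pick an alternative $a'$ with $P_{-i}(a')>0$, then since $v_i(a)P_{-i}(a)=0$ for every $a$ we must have $v_i(a')=0$, so $i$'s utility is still $0$; any other misreport makes $f$ pick $null$, again value $0$. This is the step I expect to require the most care, because multiplicative payments make a player who receives a worthless bundle pay nothing, so the usual ``charge the externality'' intuition degenerates and one must preclude a player steering the $\arg\max$ onto an alternative that is worthless to the others but valuable to her; the point is that such an alternative is never selected (the product vanishes there), which is exactly why the hypothesis that $f$ breaks ties toward $null$ when the optimum is $0$ is needed. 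I would also flag the mild regularity assumption that the maxima $\max_a\prod_{j\neq i}v_j(a)$ are attained — automatic in combinatorial auctions, where $\mathcal A$ is finite, and implicit already in $f$ being well defined.
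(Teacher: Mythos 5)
Your proposal is correct and follows essentially the same route as the paper's proof: the identical payment rule $1-p_i=\prod_{j\neq i}v_j(f(\vec v))/\max_{a}\prod_{j\neq i}v_j(a)$ (with $p_i=0$ when the optimum is $0$), the same $[0,1)$ feasibility check, and the same case analysis, including the observation that when the truthful optimum is $0$ any non-null alternative $f$ might select under a misreport satisfies $v_i(\cdot)=0$. Your extra remarks on the role of the tie-breaking-to-$null$ hypothesis and on attainment of the maxima are accurate but not a different argument.
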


Note that the theorem holds in particular for combinatorial auctions, where all valuations are non-negative and the allocation that does not allocate any items has value $0$ for all players. In fact, the use of this condition is a technicality and the function that maximizes the Nash Social Welfare can be implemented in any domain if we do allow $p_i(v_1,\ldots, v_n)=1$, but it suffices to allow that only if the Nash Social Welfare in the instance $(v_1,\ldots, v_n)$ is $0$.

\begin{proof}
Let $p_i$ be the following function:

$$
p_i(v_1,\ldots,v_n) = 
\begin{cases}
1 -  \frac{\prod_{j \neq i} v_j(f(v_1,\ldots, v_n))}{\max_{a \in \cA} \prod_{j \neq i} v_j(a)} & \text{if $\max_{a \in \cA} \prod_{j} v_j(a)> 0$};\\
0 & \text{if $\max_{a \in \cA} \prod_{j} v_j(a)= 0$}.
\end{cases}$$
Observe first that the mechanism $(f,p)$ is well defined, as for every player $i$ and $v_1,\ldots, v_n$ we have that $0\leq p_i(v_1,\ldots, v_n)<1$.

To see that this payment function is incentive compatible, start by fixing $v_{-i}$. We first handle the case in which $\max_{a \in \cA} \prod_{j} v_j(a)> 0$. We have that $(1-p_i(v_i, v_{-i}))\cdot v_i(f(v_i, v_{-i}))=\frac {\left(\prod_{j\neq i}v_j(o) \right)\cdot v_i(o)} {\max_{a \in \cA} \prod_{j \neq i} v_j(a)}$, for  $o=f(v_i,v_{-i})\in \arg\max_a \prod_{j}v_j(a)$. For any other $v'_i$, we have that $(1-p_i(v'_i, v_{-i}))\cdot v_i(f(v'_i, v_{-i}))=\frac {\left(\prod_{j\neq i}v_j(w) \right)\cdot v_i(w)} {\max_{a \in \cA} \prod_{j \neq i} v_j(a)}$, for $w=f(v'_i,v_{-i})$. Incentive compatibility follows by $\left(\prod_{j\neq i}v_j(o) \right)\cdot v_i(o) \geq \left(\prod_{j\neq i}v_j(w)\right)\cdot v_i(w)$ since $o$ maximizes the NSW. 

Now, consider the case in which $\max_{a \in \cA} \prod_{j} v_j(a)= 0$. Recall that in this case the mechanism outputs the null alternative which all players value at $0$. Observe that for every $v'_i$ it is either the case that $f(v'_i,v_{-i})=null$ or that $f(v'_i,v_{-i})=w\neq null$ and $v_i(w)=0$. In either case we have that $v_i(f(v_i,v_{-i})) \cdot(1-p(v_i,v_{-i}))= v_i(f(v'_i,v_{-i})) \cdot(1-p(v'_i,v_{-i}))=0$.

\end{proof}

\subsection{Characterizations: An Analog of Roberts Theorem}\label{subsec-analog}

We now prove an analog of Roberts theorem: we show that if the valuations are unrestricted the only set of social choice functions that are implementable in the percentage fee model are those that maximize weighted versions of the Nash Social Welfare.

Instead of proving this result directly, we prove a meta theorem that provides a one to one and onto correspondence between mechanisms in the percentage fee model and mechanisms in the traditional model. This correspondence will allow us to prove the characterization. However, it does not preserve the approximation ratio.

Thus, in Section \ref{sec-approximations} we devise computationally efficient approximation mechanisms in the percentage fee model.

\begin{definition}
Let $v$ be a positive valuation. Define ${\tt log}v$ to be the valuation such that for every bundle $S$, ${\tt log}v(S)=\log v(S)$. Given a class of positive valuations $\cV$, let ${\tt log} \cV$ denote the class of valuations
$$ {\tt log} \cV = \{ {\tt log} v: v \in \cV \}$$
\end{definition}

\begin{theorem}\label{thm:meta}
Let $\cM$ be an $n$-player mechanism in the percentage fee model that is composed of an allocation function $f:\mathcal V^n\rightarrow \mathcal A$ and a payment function $p$, where $\mathcal V$ is a class of positive valuations. Let $\cM'$ be an $n$-player mechanism in the traditional model with an allocation function $f':{\tt log}\mathcal V^n\rightarrow \mathcal A$ and payment function $p'$. Suppose that for every instance $(v_1,\ldots, v_n)$ and player $i$, $f(v_1,\ldots, v_n)=f'({\tt log}v_1,\ldots,{\tt log} v_n)$ and $p'_i({\tt log}v_1,\ldots, {\tt log}v_n)=\log \frac 1 {(1-p_i(v_1,\ldots, v_n))}$. 

Then, $\cM$ is incentive compatible in the percentage fee model if and only if $\cM'$ is incentive compatible in the traditional model.
\end{theorem}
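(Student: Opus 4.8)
The plan is to unwind both incentive-compatibility conditions into families of pairwise inequalities and check that the logarithm carries one family exactly onto the other. Fix a player $i$, a profile $v_{-i}\in\cV^{n-1}$, and two reports $v_i,v'_i\in\cV$. Since every valuation in $\cV$ is positive and $f$ outputs an alternative of $\cA$, the numbers $v_i(f(v_i,v_{-i}))$ and $v_i(f(v'_i,v_{-i}))$ are strictly positive, and since payments lie in $[0,1)$ so are $1-p_i(v_i,v_{-i})$ and $1-p_i(v'_i,v_{-i})$. Hence the percentage-fee inequality
$$\bigl(1-p_i(v_i,v_{-i})\bigr)\cdot v_i(f(v_i,v_{-i}))\ \ge\ \bigl(1-p_i(v'_i,v_{-i})\bigr)\cdot v_i(f(v'_i,v_{-i}))$$
is equivalent to the inequality obtained by taking $\log$ of both sides and regrouping, namely
$$\log v_i(f(v_i,v_{-i}))-\log\frac{1}{1-p_i(v_i,v_{-i})}\ \ge\ \log v_i(f(v'_i,v_{-i}))-\log\frac{1}{1-p_i(v'_i,v_{-i})}.$$

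Next I would rewrite this last inequality purely in the data of $\cM'$. By the definition of ${\tt log}v_i$ together with the hypothesis $f(\cdot)=f'({\tt log}(\cdot))$, we have $\log v_i(f(v_i,v_{-i}))=({\tt log}v_i)\bigl(f'({\tt log}v_i,{\tt log}v_{-i})\bigr)$, and the same with $v'_i$ in the first coordinate; by the hypothesis relating $p$ and $p'$, $\log\frac{1}{1-p_i(v_i,v_{-i})}=p'_i({\tt log}v_i,{\tt log}v_{-i})$, and likewise with $v'_i$. Substituting, the displayed inequality becomes precisely the traditional incentive-compatibility constraint of $\cM'$ for player $i$ at the profile $({\tt log}v_i,{\tt log}v_{-i})$ when the true valuation is ${\tt log}v_i$ and the deviation is ${\tt log}v'_i$. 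Finally, $v\mapsto{\tt log}v$ is a bijection from $\cV$ onto ${\tt log}\cV$ (injective, since $\log$ is strictly increasing on $(0,\infty)$, so $v\ne v'$ forces ${\tt log}v\ne{\tt log}v'$; surjective by the very definition of ${\tt log}\cV$), so applying it coordinatewise puts the triples $(v_i,v'_i,v_{-i})$ ranging over $\cV\times\cV\times\cV^{n-1}$ in bijection with the triples of valuations ranging over ${\tt log}\cV\times{\tt log}\cV\times({\tt log}\cV)^{n-1}$. Thus the conjunction of all percentage-fee IC constraints for $\cM$ is logically equivalent, constraint by constraint, to the conjunction of all traditional IC constraints for $\cM'$; in particular one holds iff the other does, which yields both directions of the theorem simultaneously.

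I do not expect a genuine obstacle here — the content is essentially bookkeeping — but two points deserve care. First, one must verify that every expression sitting inside a logarithm is strictly positive before taking logs; this is exactly where positivity of $\cV$ is used in an essential way, and where the restriction $p_i(\cdot)\in[0,1)$ is needed, so that $1-p_i(\cdot)>0$ and $\log\frac{1}{1-p_i(\cdot)}$ is a well-defined finite real. Second, one should state explicitly that $v\mapsto{\tt log}v$ is a bijection, so that the universal quantifiers over reports in the two models match up and no IC constraint is created or lost in translation. It is also worth recording that the correspondence between payment functions is itself invertible — $p_i\in[0,1)$ yields $p'_i=\log\frac{1}{1-p_i}\in[0,\infty)$, and conversely $p_i=1-e^{-p'_i}$ recovers $p_i$ — since $\log$ and $\exp$ are mutually inverse; this is what makes the statement a genuine one-to-one and onto correspondence rather than a one-directional reduction.
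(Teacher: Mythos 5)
Your proposal is correct and follows essentially the same route as the paper's proof: take logarithms of the percentage-fee incentive constraint, regroup, and substitute the definitions of $f'$, ${\tt log}v$, and $p'$ to obtain the traditional incentive constraint as a chain of equivalences. Your added remarks on positivity, the range of $p_i$, and the bijection $v\mapsto{\tt log}v$ are sensible bookkeeping that the paper leaves implicit, but they do not change the argument.
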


Note that $\cM$ and $\cM'$ are both well defined since $p_i(v_1,\ldots, v_n)\neq 1$.

\begin{proof}
For simplicity of presentation we prove incentive compatibility for player $1$ only. The proof for the other players is identical. Fix the valuations of all players except player $i$. It suffices to prove that if there exists a profitable deviation in one model than there exists a profitable deviation in the other model as well. Below we use the standard notation $(u,v_{-1})=(u,v_2,\ldots, v_n)$. We will also use the notation  $(u,{\tt log}v_{-1})=(u,{\tt log}v_2,\ldots, {\tt log}v_n)$.
\begin{align*}
\begin{split}
v_1(f(v_1,v_{-1}))\cdot (1- p_i(v_1,v_{-1})) &\geq v_1(f(v'_1,v_{-1}))\cdot  (1-p_i(v'_1,v_{-1}))\\
&\iff\\
\log v_1(f(v_1,v_{-1}))+\log (1- p_i(v_1,v_{-1})) &\geq \log v_1(f(v'_1,v_{-1}))+\log  (1-p_i(v'_1,v_{-1}))\\
&\iff\\
\log v_1(f(v_1,v_{-1}))-\log (\frac 1 {1- p_i(v_1,v_{-1})}) &\geq 
\log v_1(f(v'_1,v_{-1}))-\log  (\frac 1 {1-p_i(v'_1,v_{-1})})\\
&\iff \\
{\tt log} v_1(f'({\tt log}v_1,{\tt log}v_{-1}))-p'_i({\tt log}v_1,{\tt log}v_{-1}) &\geq
 {\tt log} v_1(f'({\tt log}v'_1,{\tt log}v_{-1}))-p'_i({\tt log}v'_1,{\tt log}v_{-1})
\end{split}
\end{align*}

\end{proof}

We say that $\cM$ and $\cM'$ from the statement of the lemma are \emph{twin mechanisms}. Although technically simple, the connection is quite powerful and allows us to easily adapt known results in the traditional model to the percentage fee model:

\begin{theorem}[an analog of Roberts' theorem]
Let $\cM$ be an $n$-player incentive compatible mechanism in the percentage fee model for an unrestricted domain of positive valuations $\cV^n$. Let $\mathcal A$ be the set of alternatives and suppose that the size of the image of the allocation function of $\cM$ is at least $3$. Then, there exist constants $\alpha_a$ (for each $a\in \mathcal A$) and $\beta_i$ (for each player $i$) such that the allocation function of $\cM$ in every instance $(v_1,\ldots, v_n)$ selects an alternative in $\mathcal A$ that maximizes  $\alpha_a\cdot \prod_iv_i(a)^{\beta_i}$.
\end{theorem}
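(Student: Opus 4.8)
The plan is to deduce this statement directly from Theorem~\ref{thm:meta} (the twin-mechanism correspondence) together with the classical Roberts theorem. First I would observe that since $\cV$ is the unrestricted domain of positive valuations, the class ${\tt log}\cV$ is the unrestricted domain of (arbitrary real-valued) valuations: the map $v\mapsto {\tt log}v$ is a bijection from positive valuations onto all functions $2^M\to\R$ (or onto all of $\R^{\cA}$ in the general alternative formulation). Given $\cM=(f,p)$ incentive compatible in the percentage fee model, I would define its twin $\cM'=(f',p')$ in the traditional model by $f'({\tt log}v_1,\ldots,{\tt log}v_n)=f(v_1,\ldots,v_n)$ and $p'_i({\tt log}v_1,\ldots,{\tt log}v_n)=\log\frac{1}{1-p_i(v_1,\ldots,v_n)}$; this is well defined because $p_i<1$ and because the correspondence $v\leftrightarrow{\tt log}v$ is a bijection, so every profile in ${\tt log}\cV^n$ arises from a unique profile in $\cV^n$. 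By Theorem~\ref{thm:meta}, $\cM'$ is incentive compatible in the traditional model. The image of $f'$ equals the image of $f$, so it has size at least $3$.

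Next I would invoke Roberts' theorem applied to $\cM'$: since the domain ${\tt log}\cV^n$ is unrestricted and $|\mathrm{Image}(f')|\ge 3$, there exist weights $\gamma_i>0$ (for each player $i$) and constants $\delta_a$ (for each alternative $a$ in the range) such that $f'$ selects, on every profile $(w_1,\ldots,w_n)\in{\tt log}\cV^n$, an alternative maximizing $\sum_i \gamma_i\, w_i(a) + \delta_a$. Now I would translate back: writing $w_i={\tt log}v_i$, the quantity being maximized is $\sum_i \gamma_i \log v_i(a) + \delta_a = \log\bigl(\prod_i v_i(a)^{\gamma_i}\bigr) + \delta_a$. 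Since $\log$ is monotone increasing, maximizing this is equivalent to maximizing $e^{\delta_a}\cdot\prod_i v_i(a)^{\gamma_i}$. Setting $\beta_i=\gamma_i$ and $\alpha_a=e^{\delta_a}$ gives exactly the desired form: $f$ selects an alternative in $\arg\max_{a\in\cA}\alpha_a\cdot\prod_i v_i(a)^{\beta_i}$.

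A couple of technical points I would want to nail down. One is the precise statement of Roberts' theorem being used — specifically that it yields strictly positive weights $\gamma_i$ and applies whenever the range has size at least $3$ over a domain that is (at least) all of $\R^{\cA}$ per coordinate; I would cite the standard version and note that the unrestricted positive-valuation domain maps onto exactly such a domain under ${\tt log}$. A second point is a tie-breaking subtlety: Roberts' theorem guarantees $f'$ always picks \emph{a} maximizer of the affine-welfare objective, not that it picks maximizers consistently, and the same caveat is inherited here; the statement only claims membership in the $\arg\max$ set, so this is fine, but I would remark on it.

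\textbf{Main obstacle.} The genuine content is all in Theorem~\ref{thm:meta} and in Roberts' theorem, both of which I may assume; the remaining work is essentially bookkeeping. The one place to be careful — and what I expect to be the only real friction — is verifying that ${\tt log}$ really does carry the unrestricted positive-valuation domain \emph{onto} the domain hypothesis required by the version of Roberts' theorem invoked (e.g.\ that negative and arbitrarily large/small affine-welfare values are all realized, so that no degenerate restriction of the domain sneaks in), and handling the fact that Roberts' theorem is usually stated for social-welfare-type objectives on $\R^n$-valued type spaces rather than on valuation functions over $2^M$; reconciling these formulations cleanly is the step I would spend the most care on.
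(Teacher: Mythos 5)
Your proposal is correct and follows essentially the same route as the paper: pass to the twin mechanism of Theorem~\ref{thm:meta}, note that ${\tt log}\cV$ is the fully unrestricted domain so Roberts' theorem applies, and translate the affine maximizer $\delta_a+\sum_i\gamma_i w_i(a)$ back by exponentiation to $\alpha_a\prod_i v_i(a)^{\beta_i}$ with $\alpha_a=e^{\delta_a}$, $\beta_i=\gamma_i$. Your version merely spells out the bookkeeping (bijectivity of ${\tt log}$, monotonicity of $\exp$, tie-breaking) that the paper leaves implicit.
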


\begin{proof}
Let $\cM'$ be the twin mechanism of $\cM$.  Since $\cV^n$ is an unrestricted domain of positive valuations, ${\tt log }\cV$ is an unrestricted domain (with no positivity condition). By Roberts' theorem, $\cM$ must be an affine maximizer, i.e. it maximizes $\alpha_a + \sum_{i=1}^{n} \beta_i w_i(a)$ for some parameters $\alpha_a, \beta_i$. This also defines the allocation function of the mechanism $\cM$ to be as in the statement of the theorem.
\end{proof}

\subsection{Single Parameter Domains}\label{subsec-single-parameter}

We now study single parameter domains in the percentage fee model. We focus on binary single parameter domains where for each player $i$, the set of alternatives $\mathcal A$ is divided to ``winning'' alternatives $\mathcal W_i$ and ``losing'' alternatives $\mathcal L_i$. For every player $i$ and valuation $v_i\in \mathcal V_i$ there is a value $h_{v_i}$ such that $v_i(a)=h_{v_i}$, for all $a\in \mathcal W_i$. There is also some value\footnote{Note that we assume that $l_i>0$ since if $l_i=0$ only trivial functions can be implemented: fixing $v_{-i}$, player $i$ will get an alternative in $\mathcal L_i$ only if for all $u$ it holds that $f_i(u,v_{-i})\in \mathcal L_i$, as if there is some $u$ for which $f_i(u,v_{-i})\in \mathcal W_i$, then $v_i(f(u,\ldots, v_n))(1-p_i(v_1,\ldots,v_n)>v_i(f(v_i,\ldots, v_n))(1-p_i(v_1,\ldots,v_n)$.} $l_i>0$ such that for every valuation $v_i\in \mathcal V_i$ and alternative $a\in \mathcal L_i$, $v_i(a)=l_i$.

Recall that a social choice function $f$ is \emph{monotone} for player $i$ if, for every $v,u$ for which $h_v<h_u$ and $v_{-i}$ it holds that if $f(v,v_{-i})\in \mathcal W_i$ then $f(u,v_{-i})\in \mathcal W_i$. Recall that in the traditional model monotonicity characterizes implementability. Unlike implementability in rich domains where different functions can be implemented in the traditional model and in the percentage fee model, for single parameter domains we get that the set of implementable social choice functions is identical in the percentage fee model and in the traditional model:

\begin{theorem}
Let $f$ be a social choice function when the domains of all players are single parameter. Then, $f$ is implementable in the percentage fee model if and only if $f$ is monotone for each player $i$.
\end{theorem}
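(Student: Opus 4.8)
The plan is to deduce this from the twin-mechanism correspondence of Theorem~\ref{thm:meta}, together with the classical fact recalled above that in the traditional model an allocation function on a single-parameter domain is implementable if and only if it is monotone. Three easy observations make the reduction go through. First, a binary single-parameter domain $\mathcal V$ consists of positive valuations (the losing value satisfies $l_i>0$, and I take the winning values $h_{v_i}>0$, which is the natural convention), so Theorem~\ref{thm:meta} applies to it. Second, $\log\mathcal V$ is again a binary single-parameter domain: player $i$ has winning value $\log h_{v_i}$ and losing value $\log l_i$ (these may now be negative or zero, but the traditional single-parameter characterization is insensitive to the sign of the values). Third, since $t\mapsto\log t$ is strictly increasing, $h_v<h_u$ if and only if $\log h_v<\log h_u$; hence $f$ is monotone for player $i$ on $\mathcal V$ if and only if it is monotone for player $i$ on $\log\mathcal V$.

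For the ``only if'' direction I would argue as follows. Suppose $(f,p)$ is incentive compatible in the percentage-fee model on $\mathcal V$, so each $p_i(\cdot)\in[0,1)$. Let $\mathcal M'=(f,p')$ be its twin mechanism, where $p'_i=\log\frac1{1-p_i}$; this is well defined and, since $1-p_i\in(0,1]$, satisfies $p'_i\ge0$. By Theorem~\ref{thm:meta}, $\mathcal M'$ is incentive compatible in the traditional model on $\log\mathcal V$, so $f$ is monotone on $\log\mathcal V$, hence monotone on $\mathcal V$ by the third observation.

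For the ``if'' direction I would start from a monotone $f$ on $\mathcal V$, note it is monotone on $\log\mathcal V$, invoke the traditional single-parameter characterization to get a payment scheme $p'$ making $(f,p')$ incentive compatible in the traditional model on $\log\mathcal V$, and then set $p_i:=1-e^{-p'_i}$ and appeal to Theorem~\ref{thm:meta}. The one point requiring care --- and the only real friction I anticipate --- is that this yields $p_i\in[0,1)$ only if $p'_i\ge0$, whereas the traditional construction a priori allows negative payments. The fix is to choose $p'$ with non-negative values, which is always possible in the binary case: incentive compatibility forces the payment of player $i$ (for a fixed $v_{-i}$) to be a constant $q^W_i$ on all winning reports and a constant $q^L_i$ on all losing reports, and the constraints reduce to $\tau'_i-\log l_i\le q^W_i-q^L_i\le\tau_i-\log l_i$, where $\tau_i=\inf\{\log h_u:u\text{ wins}\}$ and $\tau'_i=\sup\{\log h_u:u\text{ loses}\}$, with $\tau'_i\le\tau_i$ precisely by monotonicity. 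Taking $q^W_i-q^L_i=\tau'_i-\log l_i$ and then $q^L_i=\max(0,\log l_i-\tau'_i)$ gives $q^L_i\ge0$ and $q^W_i=\max(\tau'_i-\log l_i,0)\ge0$, so $p'_i\ge0$ everywhere, as required; this is the step I expect to be the main (mild) obstacle.

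Finally, I would record that the ``only if'' direction also admits a short direct proof bypassing the correspondence: if $f(v,v_{-i})$ is winning while $f(u,v_{-i})$ is losing for some $h_u>h_v$, then chaining the incentive-compatibility inequalities of the true types $v$ and $u$ --- using that the losing outcome is worth $l_i$ to both and that $1-p_i(v,v_{-i})>0$ --- forces $h_v\ge h_u$, a contradiction. This is a useful sanity check, but the twin-mechanism route is the cleaner one and is the approach I would present.
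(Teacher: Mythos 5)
Your proposal is correct, but it takes a genuinely different route from the paper's. The paper argues directly in the percentage-fee model: it ``normalizes'' the given mechanism so that winning reports carry a zero percentage fee, checks that this rescaling preserves incentive compatibility, and then reads off the threshold condition $h_{v_i}\geq l_i\cdot(1-p'_i(u,v_{-i}))$, which immediately gives monotonicity; it never invokes the log correspondence, and in fact it only spells out the ``implementable $\Rightarrow$ monotone'' direction, leaving the payment construction for the converse implicit. You instead deduce both directions from Theorem~\ref{thm:meta} together with the classical traditional-model characterization on ${\tt log}\mathcal V$, and you correctly isolate the one real point of friction: the twin correspondence turns traditional payments into legal percentage fees in $[0,1)$ only when those payments are non-negative, and your explicit threshold choice $q^W_i-q^L_i=\tau'_i-\log l_i$ with $q^L_i=\max(0,\log l_i-\tau'_i)$ indeed yields non-negative payments, with monotonicity ($\tau'_i\le\tau_i$) being exactly what makes the feasible interval non-empty. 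What your route buys is uniformity and completeness: the single-parameter result becomes a corollary of the meta-theorem, and the ``if'' direction (which the paper glosses over) is made fully explicit. What the paper's direct argument buys is independence from the convention $h_{v_i}>0$ that the log reduction forces on you; your closing direct argument, chaining the two IC inequalities and using $1-p_i(v,v_{-i})>0$, is essentially the paper's proof and works without that convention (note, though, that some positivity of winning values is genuinely needed for the ``if'' direction, so your convention is the right reading of the statement). Two small loose ends to patch in a final writeup: handle the degenerate profiles $v_{-i}$ where every report wins or every report loses (there $\tau'_i$ is $\pm\infty$ and your formula is undefined; just take zero payments, as there are no binding constraints), and justify in a line that IC forces the traditional payment to be constant across winning reports and across losing reports, which holds because all winning (respectively losing) alternatives have the same value for the player.
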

\begin{proof}
Let $\cM$ be a mechanism that implements $f$ with a payment function $p$. To show that $f$ is monotone is will be easier to work with a ``normalized'' mechanism in which the payment for winning player is always $0$. We define  $\cM'$ to be a mechanism with a social choice function $f$ and payment function $p'$ that is defined as follows:
$$
p'_i(v_1,\ldots, v_n)=
\begin{cases}
0 & f(v_1,\ldots, v_n)\in \mathcal W_i;\\
1-\frac {1-p_i(v_1,\ldots, v_n)} {1-p_i(w,v_{-i})} & f(v_1,\ldots, v_n)\in \mathcal L_i \text{ and $\exists w$ s.t. $f(w,v_{-i})\in \mathcal W_i$;}\\
p_i(v_1,\ldots, v_n) & \text{otherwise.}
\end{cases}
$$
Note that $\cM'$ may not be formally a mechanism since it might be that $p'_i(\cdot) \not\in [0,1)$. However, it is still incentive compatible: for every $v_i,u_i,v_{-i}$ such that $f(v_i,v_{-i}) \in \mathcal W_i$ and  $f(u_i,v_{-i}) \in \mathcal L_i$, it holds that:
\begin{align*}
	v_i(f(v_i,v_{-i}))\cdot (1-p_i(v_i,v_{-i})) &\geq v_i(f(u_i,v_{-i}))\cdot (1-p_i(u_i,v_{-i})) \\
	&\iff& \\
	v_i(f(v_i,v_{-i})) \geq v_i(f&(u_i,v_{-i}))\cdot \frac {(1-p_i(u_i,v_{-i}))} {(1-p_i(v_i,v_{-i}))} \\
	& \iff \\
	 v_i(f(v_i,v_{-i}))\cdot (1-p'_i(v_i,v_{-i})) &\geq v_i(f(u_i,v_{-i}))\cdot (1-p'_i(u_i,v_{-i}))
\end{align*}

  We can similarly show that for every $v_i,u_i,v_{-i}$ such that $f(v_i,v_{-i}) ,f(u_i,v_{-i}) \in \mathcal L_i$
\begin{align*}
v_i(f(v_i,v_{-i}))\cdot (1-p_i(v_i,v_{-i})) &\geq v_i(f(u_i,v_{-i}))\cdot (1-p_i(u_i,v_{-i})) 
\\ &\iff 
\\ v_i(f(v_i,v_{-i}))\cdot (1-p'_i(v_i,v_{-i})) &\geq v_i(f(u_i,v_{-i}))\cdot (1-p'_i(u_i,v_{-i}))
\end{align*}
As for $v_i,u_i,v_{-i}$ such that $f(v_i,v_{-i}) ,f(u_i,v_{-i}) \in \mathcal W_i$, we get that $v_i(f(v_i,v_{-i}))\cdot (1-p'_i(v_i,v_{-i})) = v_i(f(u_i,v_{-i}))\cdot (1-p'_i(u_i,v_{-i}))$.

We prove that $f$ is monotone for each player $i$ by using the fact that $\cM'$ implements it. Fix some $v_{-i}$ and let $v_i$ be some valuation such that $f(v_i,v_{-i})\in \mathcal W_i$ and $u_i$ be some valuation such that $f(u_i,v_{-i})\in \mathcal L_i$ (if there is no such $v_i$ or no such $u_i$ then the function is trivially monotone with respect to player $i$ and this $v_{-i}$). We have that $v_i(f(v_i,v_{-i}))\cdot (1-p'_i(v_i,v_{-i})) \geq v_i(f(u,v_{-i}))\cdot (1-p'_i(u,v_{-i}))$ if and only if some alternative in $\mathcal W_i$ is selected. Recalling that $p'_i(v_i,v_{-i})=0$ we get that $h_{v_i}\geq l_i\cdot (1-p'_i(u,v_{-i}))$ if and only if some alternative in $\mathcal W_i$ is selected. That is, for each valuation $v'_i$ for which $h_{v'_i} > h_{v_i}$, an alternative from $\mathcal W_i$ is selected, as needed.
\end{proof}

\section{Computationally Efficient Approximation Mechanisms}\label{sec-approximations}

This section studies the power of polynomial-time incentive-compatible algorithms. The primary tool that we use is maximal-in-range mechanisms. In the percentage fee model, these are mechanisms that find an allocation that maximizes the Nash Social Welfare over some predefined set of allocations. The incentive compatibility of maximal-in-range mechanisms with suitable payments follows from Theorem \ref{thm:VCG}.

First, we develop polynomial-time incentive-compatible mechanisms for a constant number of players $n$ with XOS or subadditive valuations (Subsection \ref{subsec-approx-constant-number}). The approximation ratio of the mechanisms is $n$ for XOS valuations and $n\cdot\log m$ for subadditive valuations. Note that achieving an $n$-approximation via an  incentive-compatible mechanism is easy in the traditional model of social welfare: either conduct a second-price auction on the grand bundle or allocate all items randomly. However, these mechanisms do not approximate the NSW well (a second price auction on the grand bundle provides an approximation ratio of $0$, and a random allocation may provide a reasonable approximation with an exponentially small probability). We develop a new method that provides an $n$-approximation by utilizing $n$-wise independent distributions.

When $n$ is not necessarily a constant, we develop two randomized maximal-in-range algorithms. The first one (Subsection \ref{subsec-approx-m23}), provides a ratio of $O(m^{\frac 2 3})$ in polynomial time with $poly(n,m)$ value queries. The second one (Subsection \ref{subsec-approx-m12}) provides a better ratio of $O(m^{\frac12})$ with $poly(n,m)$ value queries. However, although the number of queries of the latter algorithm is polynomial, its running time is exponential.

Finally, in Subsection \ref{subsec-hardness}, we show that maximal-in-range mechanisms cannot do much better in polynomial time: for any constant number of players $n>1$, there is no maximal-in-range algorithm that provides an approximation ratio better than $n$, unless $NP\subseteq P/poly$.

\subsection{Approximations for a Constant Number of XOS / Subadditive Players}\label{subsec-approx-constant-number}

\begin{theorem}
For $m$ items and $n$ players with XOS valuations, there is a deterministic mechanism in the percentage fee model that guarantees an $(1+\eps)n$-approximation to the Nash Social Welfare (for any constant $\eps>0$). If the valuations are subadditive, then the approximation ratio is $O(n \log m)$. The running time of the mechanism is $O(m^n)$, hence polynomial for a constant number of agents $n$.
\end{theorem}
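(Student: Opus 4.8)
The plan is to build a \emph{maximal-in-range} mechanism whose incentive compatibility comes for free from Theorem~\ref{thm:VCG}. Fix, once and for all (independently of the reports), a family $\mathcal H$ of maps $h:[m]\to[n]$ that constitutes an $n$-wise independent distribution with near-uniform marginals and has support size $O(m^n)$; a standard choice is $h_P(j)=P(x_j)\bmod n$ for a polynomial $P$ of degree less than $n$ over $\FF_q$, where $q=\Theta(\max(m,n))$ is a prime and $x_1,\dots,x_m\in\FF_q$ name the items. Each $h$ induces the allocation $S^h=(h^{-1}(1),\dots,h^{-1}(n))$, and I take the range to be $R=\{S^h:h\in\mathcal H\}\cup\{(\emptyset,\dots,\emptyset)\}$. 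On input $(v_1,\dots,v_n)$ the mechanism outputs the allocation in $R$ maximizing $\prod_i v_i(S_i)$ (fixed tie-breaking, and the empty allocation when this maximum is $0$), with the percentage-fee payments of Theorem~\ref{thm:VCG}. Since $R$ contains the empty allocation, which every player values at $0$, Theorem~\ref{thm:VCG} applies verbatim with ``$\cA$'' instantiated as $R$, so the mechanism is incentive compatible; enumerating $R$ and making $n$ value queries per allocation costs $O(m^n)$ time and queries, and everything is deterministic.

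The substance is the approximation bound, which I would get by showing that $R$ already contains a high-NSW allocation for every instance. Fix an instance and an NSW-optimal allocation $S^*$; if the optimal NSW is $0$ there is nothing to prove, so assume $v_i(S^*_i)>0$ for all $i$. For XOS $v_i$, let $a_i$ be an additive function supporting $v_i$ at $S^*_i$, so $a_i(T)\le v_i(T)$ for every $T$ while $a_i(S^*_i)=v_i(S^*_i)$. For subadditive $v_i$, first pass to an XOS valuation $\hat v_i\le v_i$ with $\hat v_i(S^*_i)\ge v_i(S^*_i)/O(\log m)$ (the known $\Theta(\log m)$ approximation of subadditive valuations from below by XOS valuations), and then let $a_i$ support $\hat v_i$ at $S^*_i$, so that $a_i\le v_i$ and $a_i(S^*_i)\ge v_i(S^*_i)/O(\log m)$. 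In both cases monotonicity of $v_i$ gives, for every $h$,
\[
v_i(S^h_i)\ \ge\ v_i(S^h_i\cap S^*_i)\ \ge\ a_i(S^h_i\cap S^*_i)\ =\ \sum_{j\in S^*_i:\ h(j)=i}a_i(j).
\]

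Now I would average over $h\sim\mathcal H$. Expanding $\prod_i\big(\sum_{j\in S^*_i:\,h(j)=i}a_i(j)\big)$ as a sum over tuples $(j_1\in S^*_1,\dots,j_n\in S^*_n)$, and using \emph{crucially} that $S^*_1,\dots,S^*_n$ are pairwise disjoint so each such tuple consists of $n$ distinct items, $n$-wise independence plus near-uniform marginals give $\E_h[\prod_i\mathbf 1[h(j_i)=i]]\ge((1-\eps')/n)^n$, where $\eps'\to0$ as $q\to\infty$; hence $\E_{h\sim\mathcal H}[\prod_i v_i(S^h_i)]\ge\big((1-\eps')/n\big)^n\prod_i a_i(S^*_i)$. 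This is at least $\big(\tfrac{1}{(1+\eps)n}\big)^n\prod_i v_i(S^*_i)$ in the XOS case (take $q$ large enough in terms of $\eps,n$, which still leaves $|R|=O(m^n)$) and at least $\big(\tfrac{1}{O(n\log m)}\big)^n\prod_i v_i(S^*_i)$ in the subadditive case. So some $h^\star\in\mathcal H$ attains at least the expectation, and since the mechanism outputs the NSW-maximizer over $R\ni S^{h^\star}$, its NSW is within $(1+\eps)n$ (resp.\ $O(n\log m)$) of the optimum.

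The step I expect to be the crux is exactly this expectation computation, and the design is tailored to make it work: one must arrange that the $n$ relevant evaluations of $h$ land on distinct items, since that is what lets mere $n$-wise independence — and therefore an affordable range of size $O(m^n)$ rather than $n^m$ — suffice; the disjointness of the optimal cells $S^*_i$ together with the ``restrict to $S^*_i$ and drop to a supporting additive'' lower bound on $v_i(S^h_i)$ is precisely what delivers this. The only extra ingredient beyond XOS is the $\Theta(\log m)$ gap between subadditive and XOS valuations, which is where the additional $\log m$ factor in the subadditive bound comes from.
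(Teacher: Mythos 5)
Your proof is correct and follows essentially the same route as the paper: a maximal-in-range mechanism over the $O(m^n)$-size support of a polynomial-based $n$-wise independent assignment, incentive compatibility via Theorem~\ref{thm:VCG}, the expectation bound through supporting additive valuations restricted to the (disjoint) optimal bundles, and the $O(\log m)$ XOS approximation for the subadditive case. Your explicit treatment of the disjointness of the $O_i$'s and of adding the empty allocation to the range are just careful spellings-out of points the paper leaves implicit.
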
 

We remark that a trivial mechanism maximizing over all possible allocations would have a running time of $O(n^m)$. This is exponential even for a constant number of agents $n$. (We may assume that $m \geq n$, otherwise in every allocation there is one player that is not allocated any item, thus the NSW is always $0$.)

\begin{proof}
Our mechanism is a maximal-in-range mechanism with the following range: Given a set of items $M$ and a set of agents $N$, consider an $n$-wise independent distribution over $N^M$, i.e. a distribution $\cD$ such that for any $j_1,j_2,\ldots,j_n \in M$, $\cD$ projected on $N^{\{j_1,\ldots,j_n\}}$ is a uniform product distribution. Such distributions are known with support of size $O(m^n)$, with exact uniformity for $m,n$ powers of a prime $p$, and with some small deviation from uniformity for general $m,n$.\footnote{Assuming $m > n$ are powers of the same prime $p$, and $\FF_m$ is a field with $m$ elements, we consider a polynomial $p(x) = \sum_{k=0}^{n-1} a_k x^k$ where $a_0,a_1,\ldots,a_{n-1}$ are uniformly random in $\FF_m$. Agent $i$ receives item $j$ if and only if $p(j) = i \pmod n$. The random variables $p(j)$ are $n$-wise independent, and uniformly distributed in $\FF_m$. The size of the probability space is $O(m^n)$, corresponding to the choices of $n$ coefficients in $\FF_q$. For general values of $m,n$, we can choose a prime power $q > m, q = O(m)$ and embed our construction in $\FF_q$, with some small non-uniformity in our distribution (due to $q$ not being divisible by $n$).}
The deviation from uniformity is the reason for $(1+\eps)n$ in the statement, otherwise we get a factor of $n$. In the following, we ignore this issue and assume that we have a uniform $n$-wise distribution $\cD$. 

Given valuations $v_1,\ldots,v_n$, the mechanism maximizes $\prod_{i=1}^{n} v_i(S_i)$ over all allocations $(S_1,\ldots,S_n)$ in the support of $\cD$, which takes $O(m^n)$ running time (by simple exhaustive search). The remaining claim is that this mechanism provides an $n$-approximation to the optimal Nash Social Welfare. We prove that
\begin{equation}
\label{eq:n-approx}
\E_{(S_1,\ldots,S_n) \sim \cD}\left[\prod_{i=1}^{n} v_i(S_i) \right] \geq \frac{1}{n^n} \prod_{i=1}^{n} v_i(O_i)
\end{equation}
where $(O_1,\ldots,O_n)$ is an optimal allocation. Hence, the best allocation in the support of $\cD$ provides an $n$-approximation in terms of the Nash social welfare, $\left( \prod_{i=1}^{n} v_i(S_i) \right)^{1/n} \geq \frac{1}{n} \left( \prod_{i=1}^{n} v_i(O_i) \right)^{1/n}$. \footnote{Note a subtle point here: We cannot claim that a randomized mechanism which samples a random allocation from $\cD$ provides an $n$-approximation in expectation, because $\E[\left( \prod_{i=1}^{n} v_i(S_i) \right)^{1/n}]$ could be substantially smaller than $\left( \E[\prod_{i=1}^{n} v_i(S_i)] \right)^{1/n}$.)}

The proof of (\ref{eq:n-approx}) is as follows: consider an optimal allocation $(O_1,\ldots,O_n)$ and for each agent, an additive valuation in their XOS representation that attains the value of $O_i$:  $v_i(O_i) = \sum_{j \in O_i} w_{ij}$. We also have 
$v_i(S) \geq \sum_{j \in S} w_{ij}$ for every bundle $S$, by the definition of XOS valuations. Hence,
\begin{eqnarray*}
\E_{(S_1,\ldots,S_n) \sim \cD}\left[\prod_{i=1}^{n} v_i(S_i) \right] 
& \geq & \E_{(S_1,\ldots,S_n) \sim \cD}\left[\prod_{i=1}^{n} \sum_{j \in S_i} w_{ij} \right]  \\
& \geq & \E_{(S_1,\ldots,S_n) \sim \cD}\left[\prod_{i=1}^{n} \sum_{j \in S_i \cap O_i} w_{ij} \right]  \\
& = & \E_{(S_1,\ldots,S_n) \sim \cD}\left[\sum_{j_1 \in S_1 \cap O_1} \ldots \sum_{j_n \in S_n \cap O_n} \prod_{i=1}^{n} w_{i j_i} \right]  \\
& = & \sum_{j_1 \in O_1} \ldots \sum_{j_n \in O_n} \Pr[j_1 \in S_1, \ldots, j_n \in S_n] \prod_{i=1}^{n} w_{i j_i}.  \\
\end{eqnarray*}
Now by the property of uniform $n$-wise independent distributions, we have $\Pr[j_1 \in S_1, \ldots, j_n \in S_n] = \prod_{i=1}^{n} \Pr[j_i \in S_i] = \frac{1}{n^n}$. So we can write
\begin{eqnarray*}
\E_{(S_1,\ldots,S_n) \sim \cD}\left[\prod_{i=1}^{n} v_i(S_i) \right] 
& \geq & \sum_{j_1 \in O_1} \ldots \sum_{j_n \in O_n} \frac{1}{n^n} \prod_{i=1}^{n} w_{i j_i} \\
& = & \frac{1}{n^n} \prod_{i=1}^{n} \sum_{j \in O_i} w_{ij} = \frac{1}{n^n} \prod_{i=1}^{n} v_i(O_i)
\end{eqnarray*}
which is the desired inequality (\ref{eq:n-approx}).

Finally, for subadditive valuations, we recall that for every subadditive valuation $v$ there is an XOS valuation $v'$ that approximates it within an $O(\log m)$ factor: for every $S$, $v'(S)\leq v(S)\leq v'(S) \cdot \log m$ \cite{D07}. As a corollary, the same analysis gives a factor of $O(n \log m)$ for subadditive valuations.
\end{proof}

\subsection{A Polynomial Time \texorpdfstring{$O(m^{\frac 2 3})$}{O(m\^{}(2/3))}-Approximation Mechanism}\label{subsec-approx-m23}

We now provide a randomized maximal-in-range mechanism that provides an approximation ratio of $O(m^{\frac 2 3})$ for subadditive valuations. The mechanism combines  two maximal-in-range algorithms. The first mechanism finds an allocation that maximizes the NSW among all allocations that allocate one item to each player. This mechanism provides an approximation ratio of $\frac m n$. Note that this mechanism can be implemented by running a bipartite matching algorithm on the graph that contains players on one side and items on the other. The graph has an edge between player $i$ and item $j$ if $v_i(\{j\})>0$ and in this case its weight is $\log v_i(\{j\})$ .

The second mechanism partitions the set of items to $n^2$ bundles $S_1,...,S_n$, by allocating each item to one of the bundles independently at random and finding the allocation that maximizes the NSW among all allocations in which each player gets one such bundle. This allocation can be found by running a bipartite matching algorithm, similarly to before. We show that this mechanism provides an approximation ratio of $O(n^2)$. Intuitively, the idea of the algorithm is as follows. Consider some player $i$ and the bundle $O_i$ that he gets in the optimal solution. For each player $i$ there is at least one bundle $S_i$ that is valuable enough for her, since by subadditivity there is at least one bundle $S_j$ such that $v_i(S_j\cap O_i)\geq \frac{v_i(O_i)}{n^2}$. We say that player $i$ is interested in the bundle $S_j$. A simple application of the birthday paradox shows that since the bundle $S_j$ that player $i$ is interested in is distributed uniformly and independently, the probability that no two players are interested in the same bundle $S_j$ is at most $\frac 1 2$. Thus we can allocate each player $i$ the bundle $S_j$ that she is interested in and get an approximation ratio of $\frac 1 {n^2}$.

\subsubsection*{The Algorithm}

\begin{itemize}
\item Randomly partition the items into $n^2$ bundles, $S_1,\ldots, S_{n^2}$. Each item $j$ is assigned to a set uniformly and independently of the other items.
\item Choose the allocation that maximizes the Nash Social Welfare from the union of the following sets of allocations:
\begin{enumerate}
\item The set of allocations which consists of all allocations in which each player gets one item.
\item The set of allocations which consists of all allocations in which each player gets one of the bundles $S_1,\ldots, S_{n^2}$.
\end{enumerate}
\end{itemize}

\begin{theorem}
There are payments that make the mechanism above incentive compatible in the percentage fee model. The mechanism guarantees an approximation ratio of $\min(\frac m n, n^2)=m^{\frac 2 3}$ with probability at least $\frac 1 2$ whenever the valuations of the players are subadditive. There is an implementation of the mechanism that uses polynomially many value queries and runs in polynomial time.
\end{theorem}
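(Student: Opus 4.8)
The plan is to verify the three assertions in turn: incentive compatibility, the approximation ratio, and the running time. For incentive compatibility, I would first note that once the random partition $S_1,\ldots,S_{n^2}$ has been drawn, the collection of allocations over which the mechanism optimizes is a \emph{fixed} set $\cR$ of alternatives that does not depend on the reported valuations; adding the empty allocation to $\cR$ to serve as the $null$ alternative (every player values it at $0$), the mechanism is exactly the social choice function of Theorem~\ref{thm:VCG} on the alternative set $\cR$. Hence it admits percentage-fee payments making it incentive compatible, and since this holds for every realization of the partition, the randomized mechanism is incentive compatible.

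For the approximation ratio, fix an NSW-optimal allocation $(O_1,\ldots,O_n)$; we may assume its NSW is positive, so every $O_i\neq\emptyset$. The first range (one item per player) already gives a deterministic $m/n$-approximation: choosing $j_i\in\arg\max_{j\in O_i}v_i(\{j\})$, subadditivity gives $v_i(\{j_i\})\ge v_i(O_i)/|O_i|$, the $j_i$ are distinct since the $O_i$ are disjoint, and AM--GM together with $\sum_i|O_i|\le m$ gives $\prod_i|O_i|\le (m/n)^n$, so the best allocation in the first range has $\prod_i v_i(\cdot)\ge\prod_i v_i(\{j_i\})\ge (n/m)^n\prod_i v_i(O_i)$.

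For the second range (the $n^2$ random bundles), for each player $i$ let $k(i)$ maximize $v_i(S_k\cap O_i)$ over $k$ (breaking ties by independent fair coins). Iterated subadditivity gives $v_i(O_i)\le\sum_{k=1}^{n^2}v_i(S_k\cap O_i)$, so $v_i(S_{k(i)})\ge v_i(S_{k(i)}\cap O_i)\ge v_i(O_i)/n^2$ by monotonicity. The crucial step is that $k(1),\ldots,k(n)$ are independent and uniform on $[n^2]$: uniformity because the law of the random partition is invariant under relabelling its $n^2$ parts and $k(i)$ is equivariant under such relabellings, and independence because $k(i)$ is determined by the partition of $O_i$ alone while the $O_i$ are pairwise disjoint and items are assigned independently. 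A birthday-paradox estimate then shows the $k(i)$ are pairwise distinct with probability at least $1-\binom{n}{2}/n^2\ge 1/2$; on that event $i\mapsto S_{k(i)}$ is a valid allocation in the second range with $\prod_i v_i(S_{k(i)})\ge n^{-2n}\prod_i v_i(O_i)$, an $n^2$-approximation in NSW. Since the mechanism outputs the best allocation over the union of the two ranges, with probability at least $1/2$ it achieves a $\min(m/n,n^2)$-approximation, and $\min(m/n,n^2)\le m^{2/3}$ with equality at $n=m^{1/3}$.

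Finally, both ranges admit an efficient optimizer via maximum-weight bipartite matching: players against items with weight $\log v_i(\{j\})$ on edges where $v_i(\{j\})>0$ for the first range, and players against the bundles $S_1,\ldots,S_{n^2}$ with weight $\log v_i(S_k)$ where $v_i(S_k)>0$ for the second; a maximum-weight perfect matching maximizes $\prod_i v_i(\cdot)$ over that range (and if no perfect matching exists, that range contains no allocation of positive NSW), and we return whichever of the two is better, or $null$ if both are $0$. This uses $O(nm+n^3)$ value queries and polynomial time. The main obstacle is the probabilistic analysis of the second range: one needs a single ``interesting'' bundle per player that at once captures a $1/n^2$ share of $v_i(O_i)$ — which forces a subadditivity-based, bundle-level choice rather than a single item, the latter buying only a factor $m$ — and is uniform and independent across players, which the relabelling-symmetry of the partition supplies; everything else is routine.
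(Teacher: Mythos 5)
Your proposal is correct and follows essentially the same route as the paper: incentive compatibility via the maximal-in-range property and the exact NSW-maximization payments of Theorem~\ref{thm:VCG} applied to the fixed (post-randomization) range, the $\frac{m}{n}$ bound from best singletons plus subadditivity and AM--GM, and the $n^2$ bound from a per-player bundle capturing a $1/n^2$ fraction of $v_i(O_i)$ whose index is uniform and independent across players, followed by the birthday-paradox union bound and the two log-weight bipartite matchings. Your explicit relabelling-symmetry justification of uniformity and the null-allocation bookkeeping are just slightly more detailed versions of steps the paper treats briefly.
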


It is easy to implement the mechanism with polynomially many value queries: query each player $i$ for her value $v_i(\{j\})$, for every item $j$, and for her value for the bundles $S_1,\ldots, S_{n^2}$. The total number of queries is at most $n\cdot (m+n^2)$. Note that given these queries we can find the allocation that maximizes the NSW in polynomial time, by running a bipartite matching twice, once for each set of allocations. Incentive compatibility also follows since the mechanism is maximal-in-range (in the percentage fee model). 

As stated above, the mechanism is randomized and succeeds with probability $\frac 1  2$. Thus, we can run the mechanism $t$ times and the success probability increases to $1-(\frac 1 2)^t$. That is, the failure probability is $1/exp(m,n)$ when $t$ is large enough but still polynomial. In general, incentive compatibility is not preserved when repeating an incentive compatible mechanism several times and choosing the best outcome. However, the incentive compatibility of maximal-in-range mechanisms (and hence also of this specific mechanism) is preserved: randomization is used only to define the range of the mechanism. When payments are computed with respect to the union of the ranges produced by the different coin flips, the composed mechanism is incentive compatible.

It remains to prove that the mechanism provides the required approximation ratio. This will be proved by the following two lemmas by noting that $\min(\frac m n, n^2)\leq m^{\frac 2 3}$ for all possible values of $m,n$.

\begin{lemma}\label{algs-approx-mn}
With probability $1$, the mechanism returns a solution of NSW at least $\frac{n}{m} OPT$.
\end{lemma}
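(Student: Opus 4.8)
The plan is to show that the first set of allocations in the range — those that give exactly one item to each player — already contains an allocation whose NSW is at least $\frac{n}{m}\OPT$, where $\OPT$ denotes the optimal NSW. Since the mechanism maximizes the NSW over the union of both sets, in particular it does at least as well as the best one-item-per-player allocation, so it suffices to bound the latter. Note this requires no randomness, which is why the statement holds with probability $1$.

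First I would fix an optimal allocation $(O_1,\ldots,O_n)$, so that $\OPT = \left(\prod_{i=1}^n v_i(O_i)\right)^{1/n}$; we may assume $\OPT > 0$, as otherwise the claim is trivial. For each player $i$, by subadditivity (in fact just monotonicity suffices here, applied to singletons summing up via subadditivity) there is some item $j_i \in O_i$ with $v_i(\{j_i\}) \geq \frac{1}{|O_i|} v_i(O_i) \geq \frac{1}{m} v_i(O_i)$: indeed, subadditivity gives $v_i(O_i) \leq \sum_{j \in O_i} v_i(\{j\})$, so the maximum singleton value in $O_i$ is at least $v_i(O_i)/|O_i| \geq v_i(O_i)/m$. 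The natural attempt is to assign item $j_i$ to player $i$, but the $j_i$ need not be distinct across players. However, the sets $O_1,\ldots,O_n$ are disjoint, so $j_i \in O_i$ forces all the $j_i$ to be distinct automatically. Thus $(\{j_1\},\ldots,\{j_n\})$ is a legal allocation in the first set of the range, and its NSW is $\left(\prod_i v_i(\{j_i\})\right)^{1/n} \geq \left(\prod_i \frac{1}{m} v_i(O_i)\right)^{1/n} = \frac{1}{m}\left(\prod_i v_i(O_i)\right)^{1/n} = \frac{1}{m}\OPT$.

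This already gives a factor $\frac1m$; to get the claimed $\frac{n}{m}$ I would instead choose the $j_i$ more carefully, so that $v_i(\{j_i\}) \geq \frac{1}{|O_i|} v_i(O_i)$ and exploit that $\sum_i |O_i| \leq m$. By the AM–GM inequality, $\left(\prod_{i=1}^n |O_i|\right)^{1/n} \leq \frac{1}{n}\sum_{i=1}^n |O_i| \leq \frac{m}{n}$, hence $\left(\prod_i \frac{1}{|O_i|} v_i(O_i)\right)^{1/n} = \frac{1}{\left(\prod_i |O_i|\right)^{1/n}}\left(\prod_i v_i(O_i)\right)^{1/n} \geq \frac{n}{m}\OPT$. (If some $O_i = \emptyset$ then $v_i(O_i) = 0$ and $\OPT = 0$, already handled.) So the matching allocation that picks the best singleton from each $O_i$ has NSW at least $\frac{n}{m}\OPT$, and since the bipartite-matching step in the algorithm selects the NSW-maximizing one-item-per-player allocation, the mechanism's output is at least this good.

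The only mild subtlety — and the step I would be most careful about — is confirming that the maximization in the algorithm really ranges over \emph{all} one-item-per-player allocations (so that the specific allocation $(\{j_1\},\ldots,\{j_n\})$ built from the optimum is in the range), and that the bipartite matching with edge weights $\log v_i(\{j\})$ genuinely computes the NSW-maximizer over this set; edges are present only when $v_i(\{j\})>0$, which is exactly what we need since any allocation giving some player a zero-value singleton has NSW $0$ and is irrelevant once $\OPT>0$. Everything else is the AM–GM computation above.
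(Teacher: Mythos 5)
Your proof is correct and follows essentially the same route as the paper: pick the best singleton $o_i \in O_i$ for each player (value at least $v_i(O_i)/|O_i|$ by subadditivity), note these are distinct since the $O_i$ are disjoint, and combine $\sum_i |O_i| \leq m$ with AM--GM to get the factor $n/m$, then invoke that the mechanism maximizes NSW over a range containing all one-item-per-player allocations. The extra remarks about the zero-OPT case and the matching implementation are fine but not needed beyond what the paper states.
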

\begin{proof}
We will show that this approximation ratio is guaranteed even if we consider only the first set of allocations, where each player gets at most one item. 

Let $(O_1,\ldots, O_n)$ denote an allocation that maximizes the NSW. For each $O_i$, let $o_i$ denote an item in $O_i$ with the largest value as a singleton, i.e., $o_i\in \arg\max_{j\in O_i}v_i(\{j\})$. Let $(\{a_1\},\ldots, \{a_n\})$ denote the allocation that maximizes the NSW in the first set of allocations. We have that:
$$
\prod_{i=1}^{n} v_i(\{a_i\}) \geq \prod_{i=1}^{n}  v_i(\{o_i\}) \geq \prod_{i=1}^{n}  \frac {1} {|O_i|} \cdot v_i(O_i) \geq \left(\frac n m \right)^n \prod_{i=1}^{n}   v_i(O_i)
$$
where the second inequality is due to subadditivity and the last inequality follows from $\Sigma_i|O_i|\leq m$ and the AM-GM inequality.
\end{proof}

\begin{lemma}
With probability at least $\frac12$, the mechanism returns a solution of NSW at least $\frac{1}{n^2} OPT$.
\end{lemma}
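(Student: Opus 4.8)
The plan is to show that, with probability at least $\frac12$ over the random partition, the \emph{second} family of allocations (one bundle $S_\ell$ per player) already contains an allocation whose NSW is at least $\frac{1}{n^2}OPT$. Since the mechanism outputs an NSW-maximizer over the union of the two families, this immediately gives the lemma.

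Fix an optimal allocation $(O_1,\ldots,O_n)$; if $OPT=0$ there is nothing to prove, so assume $v_i(O_i)>0$ for every $i$. Because $\{S_\ell\}_{\ell=1}^{n^2}$ partitions $M$, which contains $O_i$, and $v_i(\emptyset)=0$, subadditivity gives $\sum_{\ell=1}^{n^2} v_i(S_\ell\cap O_i)\ge v_i(O_i)$, so the set $A_i:=\{\ell\in[n^2] : v_i(S_\ell\cap O_i)\ge v_i(O_i)/n^2\}$ is nonempty. For each $i$, let $\sigma(i)$ be a uniformly random element of $A_i$, where ties are broken using a fresh independent uniform ordering of $[n^2]$ (this extra randomness is only an analysis device; the mechanism does not see it). I will then invoke two properties: \textbf{(i)} $A_i$ is a function of the labels of the items in $O_i$ only, and since the $O_i$ are pairwise disjoint, $\sigma(1),\ldots,\sigma(n)$ are mutually independent; \textbf{(ii)} since each item of $O_i$ chooses its bundle independently and uniformly, the tuple $(S_1\cap O_i,\ldots,S_{n^2}\cap O_i)$ is exchangeable in its $n^2$ coordinates, so the law of $A_i$ is invariant under relabeling of $[n^2]$, and together with the uniform tie-break this makes $\sigma(i)$ uniform on $[n^2]$. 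Hence $\sigma(1),\ldots,\sigma(n)$ are i.i.d.\ uniform on $[n^2]$, and the birthday bound yields
$$\Pr\bigl[\sigma(1),\ldots,\sigma(n)\text{ pairwise distinct}\bigr] \;=\; \prod_{k=0}^{n-1}\Bigl(1-\tfrac{k}{n^2}\Bigr) \;\ge\; 1-\sum_{k=0}^{n-1}\tfrac{k}{n^2} \;=\; 1-\tfrac{n-1}{2n} \;\ge\; \tfrac12 .$$

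Whenever the $\sigma(i)$ are pairwise distinct, the assignment $i\mapsto S_{\sigma(i)}$ is a legitimate member of the second family, and by monotonicity $v_i(S_{\sigma(i)})\ge v_i(S_{\sigma(i)}\cap O_i)\ge v_i(O_i)/n^2$, so its NSW is $\bigl(\prod_i v_i(S_{\sigma(i)})\bigr)^{1/n}\ge \frac{1}{n^2}\bigl(\prod_i v_i(O_i)\bigr)^{1/n}=\frac{1}{n^2}OPT$. Moreover, the event that such a good allocation exists in the second family depends on the partition alone and contains the (larger, joint with the auxiliary randomness) event that the $\sigma(i)$ are distinct, so its probability is also at least $\frac12$. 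Since the mechanism returns an allocation maximizing NSW over (at least) the second family, it returns a solution of NSW $\ge \frac{1}{n^2}OPT$ with probability at least $\frac12$.

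I expect the one genuinely delicate point to be the justification of properties (i) and (ii) — that the bundle "interesting" to each player can be taken uniform on $[n^2]$ and independent across players. Disjointness of the $O_i$ gives independence, and exchangeability of the random partition (plus the auxiliary uniform tie-break) gives uniformity; everything else — the subadditive averaging, the birthday estimate, and the closing monotonicity bound — is routine.
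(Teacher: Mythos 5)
Your proof is correct and follows essentially the same route as the paper's: subadditivity gives each player an index $\ell$ with $v_i(S_\ell\cap O_i)\ge v_i(O_i)/n^2$, these indices are uniform on $[n^2]$ and independent across players, a birthday-paradox estimate shows they are pairwise distinct with probability at least $\tfrac12$, and then assigning $S_{\sigma(i)}$ to player $i$ exhibits an allocation in the range of NSW at least $\tfrac{1}{n^2}OPT$. Your explicit justification of uniformity (exchangeability plus uniform tie-breaking) and independence (disjointness of the $O_i$) merely makes rigorous what the paper asserts in passing.
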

\begin{proof}
We will show that this approximation ratio is guaranteed with the specified probability even if we consider only the second set of allocations.

Let $(O_1,\ldots, O_n)$ denote an allocation that maximizes the NSW. By subadditivity, we are guaranteed that for each player $i$ there exists at least one index $j_i \in [n^2]$ such that $v_i(O_i \cap S_{j_i})\geq \frac {v_i(O_i)} {n^2}$ (if there are several such $j_i$'s, choose one uniformly at random). Note that since each item is assigned to one of $S_1,\ldots, S_{n^2}$ independently, the $j_i$'s are independent and uniform in $\{1,2,\ldots, n^2\}$. The rest of the proof follows from the following claim, which is essentially the (flip side of the) birthday paradox:
\begin{claim*}
With probability at least $\frac 1 {2}$, there are no two players $i,i'$ such that $j_i=j_{i'}$.
\end{claim*}
\begin{subproof}
Fix two players, $i,i'$. Since $j_i, j_{i'}$ are chosen independently and uniformly from a range of $n^2$ values, the probability that $j_i=j_{i'}$ is $\frac 1 {n^2}$. The number of pairs is $\binom{n}{2}$, so by the union bound, the probability that there exists a pair of players $i \neq i'$ such that $j_i=j_{i'}$ is at most $\frac {\binom{n}{2}} {n^2} < \frac 1 2$.
\end{subproof}

Let $ALG$ denote the NSW of the best allocation in the range. This is at least as good as the allocation that gives each player $i$ the bundle $S_{j_i}$, which provides an approximation ratio of $n^2$:
$$
ALG \geq \left( \prod_{i=1}^{n}  v_i(S_{j_i}) \right)^{1/n} \geq \left( \prod_{i=1}^{n}  \frac {v_i(O_i)}{n^2}  \right)^{1/n}  \geq \frac{1}{n^2} OPT.
$$
\end{proof}

\subsection{An \texorpdfstring{$\tilde O(m^{\frac12})$}{\~O(m\^{}(1/2))}-Approximation Mechanism using \texorpdfstring{$O(n+m)$}{O(n+m)} Value Queries}\label{subsec-approx-m12}

We now show how to improve the approximation ratio of the algorithm provided in Subsection \ref{subsec-approx-m23}. However, while the number of queries the algorithm of this subsection makes is still polynomial, the running time is not polynomial. 

We again combine two maximal-in-range algorithms. The first one finds an optimal matching and provides an approximation ratio of $\frac m n$ exactly as before. In the second mechanism, we partition the items uniformly and independently into $t=2n$ bundles, $S_1,\ldots, S_t$. Our range consists of allocations where each player gets either one of the bundles $S_j$ or a singleton item. We show that the approximation factor of this mechanism is $\tilde{O}(n)$ w.h.p.

Consider an optimal allocation $(O_1,\ldots,O_n)$. We show that if player $i$ does not have an item $j\in O_i$ that contributes  $\tilde{\Omega}(\frac{1}{t} v_i(O_i))$, then $\E[v_i(S_i \cap O_i)] = \tilde{\Omega}(\frac{1}{t} v_i(O_i))$ with high probability. So if a typical bundle $S_j$ is not valuable for player $i$ with high probability, then there is a ``significant item'' which is valuable for player $i$ (if there are several such item, arbitrarily choose one item the ``significant item'' of player $i$). Thus, the following allocation gives a good approximation with high probability: if player $i$ has a significant item, allocate this item to the player. For the remaining players, allocate them an arbitrary bundle that does not contain items that are significant for some player. Note that since there are at most $n$ players and thus $n$ items that are significant for some player, there are at most $n$ bundles that contain items that are significant for some player. Hence, at most $n$ out of the $2n$ bundles can be invalidated in this way, and so at least $n$ bundles still remain available to be allocated to players that do not have significant items.

We note that the first type of allocation (one item for each player) is actually a special case of the second type, so we state only the second type in our algorithm.

\subsubsection*{The Algorithm}

\begin{itemize}
\item Randomly partition the items into $t=2n$ bundles, $S_1,\ldots, S_{t}$. Each item $j$ is assigned to $S_i$ for a random $i \in [t]$ uniformly and independently of the other items.
\item Choose an allocation that maximizes the Nash Social Welfare over the following set of allocations:
All allocations in which each player $i$ either gets some bundle $S_j$, or a single item outside of the bundles allocated to other players.
\end{itemize}

\begin{theorem}
The mechanism above with suitable payments is incentive-compatible in the percentage fee model, and guarantees an approximation ratio of $\min(\frac m n, O(n\cdot \log^2m)) = O({m}^{1/2} \cdot \log m)$ with probability at least $1-1/m$ whenever the valuations of the players are subadditive. There is an implementation of the mechanism that uses $O(m+n)$ value queries, and the running time is $2^{O(n)} \poly(m)$.
\end{theorem}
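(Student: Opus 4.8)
The plan is to prove the theorem by establishing three things in sequence: incentive compatibility, the query complexity and running time, and the approximation guarantee (which is the heart of the matter).

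\textbf{Incentive compatibility and implementation.} First I would observe that the mechanism is maximal-in-range in the percentage fee sense: once the random partition $S_1,\ldots,S_t$ is fixed, the range is a fixed set of allocations independent of the reported valuations, and the mechanism selects the range-allocation maximizing $\prod_i v_i(S_i)$. By Theorem~\ref{thm:VCG} (applied to the restricted domain of alternatives given by the range, together with a null alternative in case some player gets value $0$), there are percentage-fee payments making this incentive compatible. As noted for the $O(m^{2/3})$ mechanism, the randomness only defines the range, so repeating to boost the success probability and then taking payments with respect to the union of ranges preserves incentive compatibility. For the implementation: query each player $i$ for $v_i(\{j\})$ for all $j\in M$ and for $v_i(S_j)$ for all $j\in[t]$; this is $O(n(m+n))=O(m+n)$ queries after noting $n\le m$. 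Finding the optimal allocation in the range is where the exponential running time comes in --- we must match players to a combination of singletons and bundles, and since bundles allocated to distinct players must be distinct, I would enumerate over the $2^{O(n)}$ subsets of bundles that get used, and for each such choice solve a weighted bipartite matching (in $\log$-space) over the remaining players against the union of their candidate singletons; total time $2^{O(n)}\poly(m)$.

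\textbf{The approximation bound.} This is the main obstacle, and the proof sketch in the text outlines it. Fix an optimal allocation $(O_1,\ldots,O_n)$. Call item $j$ \emph{significant} for player $i$ if $v_i(\{j\}) \geq \tilde\Omega(\frac{1}{t} v_i(O_i))$ (with the precise threshold, say $\frac{c}{t\log m}v_i(O_i)$, to be pinned down); if several exist, pick one arbitrarily as player $i$'s significant item. The key probabilistic claim is: for any player $i$ without a significant item, $v_i(S_{j}\cap O_i) \geq \tilde\Omega(\frac1t v_i(O_i))$ for \emph{every} $j\in[t]$ with high probability (union-bounding over $j$, then over $i$). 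To prove this claim I would write $v_i(S_j\cap O_i)$ via an XOS lower bound: fix the additive minorant $w_{i}$ achieving $v_i(O_i)=\sum_{j\in O_i}w_{ij}$, so $v_i(S_j\cap O_i)\geq \sum_{j'\in O_i\cap S_j} w_{ij'}$, a sum of independent random variables (each $w_{ij'}$ included with probability $1/t$) with mean $\frac1t v_i(O_i)$ and each term bounded by $v_i(\{j'\})\le$ (the non-significance threshold) $\cdot v_i(O_i) = \tilde O(\frac1t v_i(O_i))$. A Chernoff bound then gives concentration around the mean, losing only logarithmic factors; the threshold is chosen exactly so that the maximum term is small enough relative to the mean for Chernoff to kick in at the $1/\poly(m)$ failure level. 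Then: allocate each player with a significant item that item; there are at most $n$ significant items hence at most $n$ bundles ``contaminated'' by containing someone's significant item, leaving at least $t-n=n$ clean bundles to distribute (injectively) among the at most $n$ remaining players, each of whom gets value $\tilde\Omega(\frac1t v_i(O_i)) = \tilde\Omega(\frac1n v_i(O_i))$. Since the first kind of allocation is a special case of the range (singletons with no bundles), every player gets at least $\tilde\Omega(\frac1n)$ of their optimal value, giving NSW at least $\tilde\Omega(\frac1n)\OPT$. Combining with Lemma~\ref{algs-approx-mn}'s $\frac nm$ bound via $\min(\frac mn, \tilde O(n)) = \tilde O(\sqrt m)$ finishes the proof.

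\textbf{Where the difficulty lies.} The genuinely delicate step is the concentration argument and the correct choice of the ``significance'' threshold: it must be small enough that non-significant items make the random sum well-concentrated (so a $\Theta(\log^2 m)$, not $\Theta(\log m)$, slack appears --- one $\log$ from the Chernoff tail needing the max term to be $O(1/\log m)$ of the mean, another from the union bound over $n t = \poly(m)$ events), yet the argument must still go through when $v_i(O_i)$ itself is tiny or when $O_i$ is a single heavy item (handled because that item is then significant, so the player is served directly). I would also need to be careful that the events ``player $i$ has no significant item'' and the concentration of $v_i(S_j\cap O_i)$ are analyzed under the same randomness consistently --- but since ``$i$ has a significant item'' is a deterministic property of the valuations (not of the partition), there is no circularity: we condition on which players are significant-free, and for exactly those players run the concentration bound over the partition's randomness.
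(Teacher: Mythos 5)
Your plan reproduces the paper's architecture (maximal-in-range incentive compatibility via Theorem~\ref{thm:VCG}, enumeration over the subsets of bundles used plus log-weight bipartite matchings, significant items, at least $n$ ``clean'' bundles among the $t=2n$, and combining with the $\frac{n}{m}$ matching bound of Lemma~\ref{algs-approx-mn}), but the central concentration step has a genuine gap. You write: ``fix the additive minorant $w_i$ achieving $v_i(O_i)=\sum_{j\in O_i}w_{ij}$, so $v_i(S_j\cap O_i)\geq\sum_{j'\in O_i\cap S_j}w_{ij'}$.'' That is exactly the XOS property, and the theorem is claimed for \emph{subadditive} valuations, for which no such additive minorant attaining the full value $v_i(O_i)$ exists in general (e.g., $v(S)=1$ for all proper nonempty $S$ and $v(M)=2$ is subadditive, yet any additive $a$ with $a(S)\le v(S)$ for all $S$ has $a(M)\le\frac{m}{m-1}$; known constructions push this gap to nearly $\log m$). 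The paper's proof closes exactly this hole via Claim~\ref{claim-xos-approx} (Dobzinski's approximation of a subadditive function by item prices: $\sum_{j\in S}p_j\ge v(S)/(c'\log|S|)$ while $\sum_{j\in S'}p_j\le v(S')$ for every $S'$), and the Chernoff bound in Lemma~\ref{lemma:subadditive-lower-tail} is applied to these prices, not to a value-attaining minorant. This is also where one of the two $\log m$ factors really comes from: the prices capture only a $1/O(\log m)$ fraction of $v_i(O_i)$. Your accounting attributes both logarithms to the Chernoff tail and the union bound, but those are essentially a single requirement (failure probability $1/\poly(m)$ forces the scaled mean to be $\Omega(\log m)$); with a genuine additive minorant your argument would in fact yield $O(n\log m)$ --- i.e., as written it proves the theorem only for XOS valuations, not subadditive ones.

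Everything else is sound and close to the paper. Your variant of proving concentration for \emph{every} pair $(i,j)$ and union bounding over all $nt$ events is a legitimate simplification (it sidesteps the paper's argument that the random choice of the clean bundle $\sigma(i)$ is independent of how $O_i$ is partitioned), provided you set the per-event failure probability below roughly $1/(ntm)$, which only changes constants in the Chernoff exponent. One small arithmetic slip: querying all singletons and all $t$ bundles for every player is $n(m+2n)=O(nm)$ value queries, not $O(m+n)$ (the paper's own count has the same looseness); this does not affect the substance of the result.
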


It is easy to implement the mechanism with $O(m+n)$ value queries: Given the random partition $(S_1,\ldots,S_n)$, query each player $i$ for her value $v_i(\{j\})$ for every item $j$, and her value $v(S_i)$ for every bundle $S_i$, $1\leq i\leq t$. The total number of such queries is $m+2n$. To find the best allocation in the range, we enumerate over all subsets $A$ of players who should get a full bundle, and over all subsets $B$ of $|A|$ bundles to be allocated as a full bundle: these are $2^{O(n)}$ configurations to consider. For each configuration, we find the best assignment $B$ to $A$ by solving a max-weight matching problem with weights $\log v_i(S_j)$ (if $v_i(S_j)>0$), and also the best assignment of singletons outside of $\bigcup_{i \in B} S_i$ to the players outside of $A$, by solving another matching problem with weights $\log v_i(\{j\})$ (if $v_i(\{j\})>0$). This takes $\poly(m,n)$ time for each configuration. So the total running time is $2^{O(n)} \poly(m)$.

It remains to prove that the mechanism provides the required approximation ratio. We will use the following lemma:

\begin{lemma}\label{lemma:subadditive-lower-tail}
For some constant $c>0$, the following holds. Let $v:2^S \to \R_+$ be subadditive and let $S'$ be a random subset of $S$ that is obtained by including each item of $S$ independently with probability $\frac 1 t$, for some $t>1$. Suppose that for every  $j \in S$, $v(\{j\}) \leq \frac {v(S)} {c\cdot t\cdot \log^2 m}$. Then, 
$$\Pr\left[v(S') \geq \frac {v(S)} {c\cdot t\cdot \log m} \right] > 1 - \frac{1}{m^2}.$$
\end{lemma}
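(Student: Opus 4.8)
The plan is to show a lower-tail concentration result for $v(S')$ using the standard ``subadditivity + bounded differences / exposure martingale'' toolkit, but since $v$ is only subadditive (not submodular or self-bounding), I would route the argument through a more robust inequality such as Talagrand's inequality, or more simply through a direct argument bounding the probability that $v(S')$ is small in terms of a fractional cover of $S$ by few sub-bundles. Concretely, the first step is to establish that $\E[v(S')]$ is itself $\tilde\Omega(v(S)/t)$. This follows because $v$ is subadditive: if we consider $t$ independent copies $S'_1,\ldots,S'_t$ sampled with rate $1/t$, their union dominates a random subset that contains each item with probability $1-(1-1/t)^t \geq 1-1/e$, and subadditivity gives $\sum_{r} v(S'_r) \geq v(\bigcup_r S'_r) \geq (1-1/e)\,v(S)$ in expectation (using that a subadditive function is also ``fractionally subadditive up to a logarithmic factor'' via \cite{D07}, or more directly that $\E[v(\text{p-random subset})] \geq p\cdot v(S)$ which holds for subadditive $v$). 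Hence $\E[v(S')] \geq \frac{1-1/e}{t} v(S)$, up to the $O(\log m)$ XOS-approximation loss, which is exactly where the $\log m$ in the statement comes from.

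The second step is the concentration itself: I want to show $v(S')$ does not fall much below its mean. Here the hypothesis $v(\{j\}) \leq \frac{v(S)}{c\, t \log^2 m}$ is crucial — it bounds the ``Lipschitz constant'' of $v$ with respect to adding or removing a single item, since subadditivity gives $|v(A\cup\{j\}) - v(A)| \leq v(\{j\})$. With all individual item effects bounded by roughly $\frac{\E[v(S')]}{\log m}$ (after absorbing constants), a bounded-differences argument over the $|S| \leq m$ independent indicator variables yields, via McDiarmid's inequality, a deviation bound of the form $\Pr[v(S') \leq \E[v(S')] - \lambda] \leq \exp(-\lambda^2 / (2\sum_j v(\{j\})^2))$. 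The plan is to choose $\lambda$ to be a constant fraction of $\E[v(S')]$, so that the exponent becomes $\Omega\big(\E[v(S')]^2 / (\max_j v(\{j\}) \cdot \sum_j v(\{j\}))\big)$; using $\sum_j v(\{j\}) = O(v(S)\log m)$ (subadditivity gives $\sum_j v(\{j\}) \geq v(S)$ but we need the reverse, which again costs a $\log m$ via the XOS surrogate, or one can argue more carefully) and $\max_j v(\{j\}) \leq \frac{v(S)}{c t \log^2 m}$, the exponent is $\Omega(t \log m) = \Omega(\log m)$, giving a failure probability of $m^{-\Omega(1)}$, which for a suitable constant $c$ is below $1/m^2$.

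The main obstacle I anticipate is the interplay between subadditivity and the quantities $\E[v(S')]$ and $\sum_j v(\{j\})$: subadditivity gives the ``easy direction'' of these inequalities ($v(S) \leq \sum_j v(\{j\})$ and $v(S') \leq \sum_{j\in S'} v(\{j\})$) but we need lower bounds on $\E[v(S')]$ and an upper bound on $\sum_j v(\{j\})$ relative to $v(S)$, neither of which holds for general subadditive $v$ without loss. The clean way around this is to fix, once and for all, an XOS valuation $v'$ with $v'(T) \leq v(T) \leq O(\log m)\, v'(T)$ for all $T$ (Lemma from \cite{D07}), carry out the whole argument for $v'$ — which is XOS hence fractionally subadditive, so $\E[v'(S')] \geq \frac1t v'(S)$ exactly and the additive-representation structure controls $\sum_j$ cleanly — and then translate back to $v$, paying the $O(\log m)$ factor only at the two endpoints. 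This is precisely why the statement has a $\log m$ in the conclusion and a $\log^2 m$ in the hypothesis: one $\log m$ is the XOS surrogate loss on each side. I would double-check that McDiarmid (or a Bernstein-type bound, which is sharper when the variance is small) gives the $1/m^2$ rather than merely $1/m$, adjusting $c$ as needed; a Bernstein bound with variance proxy $\sum_j v(\{j\})^2 \leq \max_j v(\{j\}) \cdot \sum_j v(\{j\})$ is the safe choice and comfortably yields the $m^{-2}$ bound.
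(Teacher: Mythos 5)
Your overall architecture (XOS-style surrogate from \cite{D07}, the singleton hypothesis as a Lipschitz bound, a Chernoff-type tail) is the right family of ideas, and your expectation step is fine (for subadditive $v$, partitioning $S$ into $t$ random classes already gives $\E[v(S')]\ge v(S)/t$). But the concentration step as you have written it has a genuine gap. Applying McDiarmid (or your Bernstein variant) to $v(S')$ itself requires controlling $\sum_{j\in S} v(\{j\})^2$ over \emph{all} items, and your calculation hinges on the bound $\sum_j v(\{j\}) = O(v(S)\log m)$, which you hope to extract from the XOS surrogate. That bound is false: even for XOS valuations the singleton values need not sum to anything close to $v(S)$ (take the unit-demand valuation $v(T)=\max_{j\in T}w_j$ with all $w_j$ equal, where $\sum_j v(\{j\}) = |S|\cdot v(S)$). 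The additive representation only controls the sum of \emph{one clause}, namely $\sum_j a_j = v'(S)$ for the maximizing clause; the singletons $v'(\{j\})$ are maxima over all clauses and their sum is uncontrolled. Plugging in the true worst case $\sum_j v(\{j\}) \le |S|\max_j v(\{j\})$, your McDiarmid exponent degrades to $O(\log^2 m/m)$ rather than the needed $\Omega(\log m)$, so the plan does not yield the $1-1/m^2$ bound.

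The fix is to concentrate an additive \emph{minorant} of $v$ rather than $v(S')$ itself, which is what the paper does: by the claim in \cite{D07} there are prices $p_j$ with $\sum_{j\in T}p_j \le v(T)$ for every $T$ and $\sum_{j\in S}p_j \ge v(S)/(c'\log|S|)$. Then $v(S')\ge \sum_{j\in S'}p_j$, which \emph{is} a sum of independent bounded terms: each $p_j\le v(\{j\})\le v(S)/(c\,t\log^2 m)$ by the hypothesis, and crucially $\sum_j p_j \le v(S)$ holds for free (take $T=S$), so after rescaling the terms to $[0,1]$ the mean is $\Theta(\log m)$ and a multiplicative Chernoff lower tail gives failure probability below $1/m^2$; the conclusion $v(S')\ge v(S)/(c\,t\log m)$ follows. (If you insist on concentrating $v'(S')$ directly, you would need a dimension-free lower-tail inequality for XOS/self-bounding functions with Lipschitz constant $\max_j v(\{j\})$ -- heavier machinery than McDiarmid, and not what your write-up invokes.) Your instinct that one $\log m$ pays for the surrogate and the other absorbs the Lipschitz scale is correct, but the argument must run through the additive minorant for the second-moment control to exist.
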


\begin{proof}
We use the following claim from \cite{D07}, which is essentially a result on approximation of subadditive functions by XOS ones.

\begin{claim*}[\cite{D07}]\label{claim-xos-approx}
There exists a constant $c'>0$ such that for every subadditive function $v:2^S \to \R_+$, there exist prices $(p_j: j \in S)$ such that
\begin{enumerate}
\item $\sum_{j \in S} p_j \geq \frac{v(S)}{c'\cdot \log |S|}$,
\item $\forall S' \subseteq S$,  $\sum_{j \in S'} p_j \leq v(S')$.
\end{enumerate}
\end{claim*}

We choose the $c$ in the lemma as $c = 6c'$. Given $S$, consider the prices $p_j$ given by the claim. 
We will give a lower bound on the expected value of $v(S')$, where each element of $S$ appears independently with probability $1/t$. We will use the following Chernoff bound:
$$
\Pr[X<(1-\eps)\mu] < e^{-\eps^2 \mu / 2}
$$
where $X = \sum_{j \in S} a_j X_j$, $0 \leq a_j \leq 1$, $\{X_j: j \in S\}$ are independent random values in $\{0,1\}$, and $\mu=\E[X]$. 

In our setting, $X_j$ is the indicator variable of the event $j \in S'$, and we set $a_j =  \frac{c \cdot t \cdot \log^2 m} {v(S)}  p_j$. Thus, by assumption we have
$a_j \leq  \frac{c \cdot t \cdot \log^2 m} {v(S)}  v(\{j\}) \leq 1$, $\mu = \E[X] = \frac{1}{t} \sum_{j \in S} a_j = \frac{6c' \cdot \log^2 m} {v(S)}  \sum_{j \in S} p_j \geq 6 \log m$ and hence by the Chernoff bound with $\epsilon=5/6$,
$$
\Pr[X < \log m] \leq \Pr[X<\mu/6] < e^{-(5/6)^2 \mu / 2} < e^{-\mu/3} \leq \frac{1}{m^2}.
$$
Consequently, with probability more than $1-1/m^2$, $X \geq \log m$, and
$$
v(S') \geq \sum_{j \in S'} p_j = \frac{v(S)}{c \cdot t \cdot \log^2 m} \sum_{j \in S'} a_j
 = \frac{v(S)}{c \cdot t \cdot \log^2 m} X \geq \frac{v(S)}{c \cdot t \cdot \log m}.
$$
\end{proof}

The approximation ratio can now be proved by combining the following two lemmas.

\begin{lemma}
With probability $1$, the mechanism outputs an allocation of value at least $\frac{n}{m} OPT$.
\end{lemma}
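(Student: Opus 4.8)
The plan is to reduce this to the argument already given for Lemma~\ref{algs-approx-mn}. The key observation is that the range of the mechanism in this subsection contains, as a special case, every allocation that hands each player a single distinct item: take the configuration in which the set $A$ of players receiving a full bundle is empty; then the only remaining constraint, ``a single item outside of the bundles allocated to other players'', is vacuous, so an arbitrary assignment of pairwise distinct singletons to the players lies in the range. Consequently the NSW of the allocation returned by the mechanism is at least the NSW of the best ``one distinct item per player'' allocation, and it suffices to bound the latter exactly as in Lemma~\ref{algs-approx-mn}.

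Concretely, I would let $(O_1,\ldots,O_n)$ be an allocation maximizing the Nash Social Welfare and, for each $i$, pick $o_i \in \arg\max_{j \in O_i} v_i(\{j\})$. Since the $O_i$ are pairwise disjoint, the items $o_1,\ldots,o_n$ are distinct, so the allocation $(\{o_1\},\ldots,\{o_n\})$ is in the range. By subadditivity, $v_i(\{o_i\}) \geq \frac{1}{|O_i|} v_i(O_i)$ for each $i$, and since $\sum_i |O_i| \leq m$, the AM--GM inequality gives $\prod_i |O_i| \leq \left(\frac{m}{n}\right)^n$. Hence
$$
\prod_{i=1}^n v_i(\{o_i\}) \;\geq\; \prod_{i=1}^n \frac{v_i(O_i)}{|O_i|} \;\geq\; \left(\frac{n}{m}\right)^n \prod_{i=1}^n v_i(O_i),
$$
so the best allocation in the range, and therefore the allocation output by the mechanism, has NSW at least $\frac{n}{m}\,\OPT$. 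Since this chain of inequalities does not refer to the random partition $S_1,\ldots,S_t$ at all, the bound holds with probability $1$.

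I do not anticipate any real obstacle: the only point that needs care is verifying that the all-singletons allocation genuinely belongs to the declared range, which is exactly the remark made just before the algorithm (``the first type of allocation is actually a special case of the second type''). Everything else is the same two-line estimate (subadditivity followed by AM--GM) used in Lemma~\ref{algs-approx-mn}.
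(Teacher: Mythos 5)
Your proof is correct and matches the paper's: the paper likewise reduces to Lemma~\ref{algs-approx-mn} by noting that taking $A=\emptyset$ (no player receives a full bundle) puts all one-item-per-player matchings in the range, and then applies the same subadditivity plus AM--GM estimate, which is deterministic and hence holds with probability $1$.
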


The proof of this Lemma is identical to the proof of Lemma \ref{algs-approx-mn}, using the fact that allocations of 1 item to each player are included in our range:
We select a set of players $A$ who obtain a full bundle $S_j$, and the other agents obtain a singleton item. As a special case, we consider $A = \emptyset$, in which case the allocation is simply a matching.

\begin{lemma}
With probability $1-1/m$, the mechanism outputs an allocation of value at least  $\frac{1}{2c n \log^2 m} OPT$.
\end{lemma}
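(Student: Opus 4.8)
The plan is to exhibit, with probability at least $1-1/m$ over the random partition into $S_1,\dots,S_t$ (with $t=2n$), a single allocation lying in the mechanism's range whose Nash Social Welfare is at least $\frac{1}{2cn\log^2 m}\,OPT$; since the mechanism outputs the best allocation in its range, this suffices. Fix an optimal allocation $(O_1,\dots,O_n)$. I would call an item $j\in O_i$ \emph{significant for player $i$} if $v_i(\{j\})>\frac{v_i(O_i)}{c\,t\,\log^2 m}$, let $B$ be the set of players possessing such an item (fixing one, $j_i^*$, for each $i\in B$), and $A=N\setminus B$. Since the bundles $O_i$ are disjoint, the items $\{j_i^*:i\in B\}$ are distinct, hence lie in at most $|B|$ of the random bundles; let $J_{\mathrm{bad}}$ index those bundles and $J_{\mathrm{good}}=\{1,\dots,t\}\setminus J_{\mathrm{bad}}$, so $|J_{\mathrm{good}}|\ge t-|B|=n+|A|$. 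The allocation I aim for gives each $i\in B$ its singleton $j_i^*$ and each $i\in A$ a distinct bundle $S_j$ with $j\in J_{\mathrm{good}}$: this is legal and inside the declared range, since the $j_i^*$ are distinct, lie in $J_{\mathrm{bad}}$-bundles, and hence are single items outside every bundle handed out whole to another player (two players of $B$ may share a bundle, but neither receives it as a whole).

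The players in $B$ already receive value $\ge\frac{v_i(O_i)}{c\,t\,\log^2 m}$, so the crux is to find distinct valuable $J_{\mathrm{good}}$-bundles for the players in $A$. For $i\in A$ I would apply Lemma~\ref{lemma:subadditive-lower-tail} with $v$ equal to $v_i$ restricted to $2^{O_i}$ (still subadditive) and with $S'=S_j\cap O_i$, which is a $\tfrac1t$-random subset of $O_i$; the hypothesis $v_i(\{x\})\le\frac{v_i(O_i)}{c\,t\,\log^2 m}$ holds for all $x\in O_i$ precisely because $i$ has no significant item, so for each fixed $j$, $\Pr\big[v_i(S_j\cap O_i)\ge\frac{v_i(O_i)}{c\,t\,\log m}\big]>1-1/m^2$. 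Let $U_i\subseteq\{1,\dots,t\}$ be the random set of indices where this fails; linearity gives $\E|U_i|<t/m^2$, so by Markov $\Pr[|U_i|>t/2]<2/m^2$, and a union bound over the at most $m$ players in $A$ yields that with probability $1-O(1/m)$ — pushed below $1/m$ by enlarging the constant $c$ in Lemma~\ref{lemma:subadditive-lower-tail}, whose Chernoff estimate has slack — we have $|U_i|\le n$ for every $i\in A$.

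On that event I would finish with a Hall-type argument. Consider the bipartite graph between $A$ and $J_{\mathrm{good}}$ with an edge $(i,j)$ whenever $v_i(S_j\cap O_i)\ge\frac{v_i(O_i)}{c\,t\,\log m}$; each $i\in A$ misses this for at most $|U_i|\le n$ indices out of all $t$, hence has at least $|J_{\mathrm{good}}|-n\ge|A|$ neighbors in $J_{\mathrm{good}}$. Every left vertex thus has degree $\ge|A|$, so Hall's condition holds and there is a matching saturating $A$; give each $i\in A$ its matched bundle $S_{\sigma(i)}$, which satisfies $v_i(S_{\sigma(i)})\ge v_i(S_{\sigma(i)}\cap O_i)\ge\frac{v_i(O_i)}{c\,t\,\log m}\ge\frac{v_i(O_i)}{c\,t\,\log^2 m}$ by monotonicity of $v_i$. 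Now every player gets value at least $\frac{v_i(O_i)}{c\,t\,\log^2 m}$, so the NSW of this allocation — and hence of the mechanism's output — is at least $\frac{1}{c\,t\,\log^2 m}\big(\prod_i v_i(O_i)\big)^{1/n}=\frac{1}{2cn\log^2 m}\,OPT$. The step I expect to be the real obstacle is not any individual estimate but the bookkeeping that makes the pieces mesh: choosing $t=2n$ so that after deleting the $\le n$ bundles spoiled by significant items there remain $\ge n+|A|$ bundles, enough for every $A$-player to keep $\ge|A|$ of them (hence Hall), while simultaneously checking that the resulting allocation — singletons for $B$, whole bundles for $A$ — is genuinely in the declared range and that the union bound over players is quantitatively strong enough to reach probability $1-1/m$.
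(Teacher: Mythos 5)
Your proof is correct and follows the paper's overall skeleton (significant items, focused vs.\ non-focused players, the subadditive lower-tail lemma, and the observation that the $\le n$ significant items spoil at most $n$ of the $t=2n$ bundles), but the finishing step is genuinely different. The paper assigns each non-focused player a uniformly random available bundle $S_{\sigma(i)}$ and argues, via independence of the item-to-bundle assignments, that $O_i\cap S_{\sigma(i)}$ is still a fresh $1/t$-sample of $O_i$, so one application of the tail lemma per non-focused player plus a union bound over $\le n$ players gives failure probability $\le n/m^2\le 1/m$. You instead apply the tail lemma to every (player, bundle) pair, control $|U_i|$ by Markov, and then extract a system of distinct valuable bundles for the non-focused players via Hall's theorem (your degree-$\ge|A|$ observation does imply Hall's condition). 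What your route buys is that it sidesteps the slightly delicate conditioning claim in the paper (that the choice of an available bundle is independent of how $O_i$ is split); what it costs is a constant factor in the failure probability: Markov plus the union bound gives $2n/m^2\le 2/m$ rather than $1/m$. You flag this yourself, and your fix is legitimate — the Chernoff estimate inside Lemma~\ref{lemma:subadditive-lower-tail} has slack, so enlarging the constant $c$ (which also appears in the approximation guarantee, so the final bound remains of the stated form) pushes the per-pair failure to $1/m^3$ and the total below $1/m$. The range-membership check (singletons of $B$-players live in unallocated $J_{\mathrm{bad}}$ bundles, distinct because the $O_i$ are disjoint; $A$-players get distinct $J_{\mathrm{good}}$ bundles) and the final NSW computation $\frac{1}{ct\log^2 m}=\frac{1}{2cn\log^2 m}$ are both right.
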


\begin{proof}
Let $(O_1,\ldots, O_n)$ denote an allocation that maximizes the NSW. We call player $i$ \emph{focused} if there exists some item $s_i \in O_i$ such that $v_i(\{s_i\})\geq \frac {v_i(O_i)} {c\cdot t \cdot \log^2 m}$. In this case, we call $s_i$ the \emph{significant} item of player $i$ (if there are several such items, we choose one arbitrarily). If the significant item $s_i$ of a focused player $i$ is assigned to a set $S_j$, we say that player $i$ is \emph{interested in} the set $S_j$.

Let's call a bundle available if it doesn't contain any significant item.  Observe that at most $n$ of the bundles $S_1,\ldots,S_t$ can contain a significant item of some player, and hence at least $n$ bundles are available (recall that $t=2n$). Let us allocate some available bundle $S_{\sigma(i)}$ to each player $i$ who is not focused; we can do this uniformly at random, given the set of available bundles. Since items are assigned to bundles $S_j$ independently, the choice of $\sigma(i)$ is independent of how $O_i$ is partitioned among the bundles. Hence, from the point of view of an unfocused player $i$ (assuming that she cares only about the items in $O_i$), the choice of $\sigma(i)$ is uniformly random, and the set $O_i \cap S_{\pi(i)}$ can be viewed as sampling elements of $O_i$ with probability $1/t$. Hence, we can apply Lemma~\ref{lemma:subadditive-lower-tail}.

\begin{claim*}\label{algs-corollary-non-focused-value}
Assuming $OPT>0$, with probability at least $1-\frac 1 {m}$, for each player $i$ that is not focused it holds that $v_i(S_{\sigma(i)}) \geq \frac {v_i(O_i)} {c\cdot t\cdot \log m}$.   
\end{claim*}

\begin{subproof}
For each player $i$ that is not focused, the value of every singleton is bounded by $v_i(\{j\}) < \frac{v_i(O_i)}{c \cdot t \cdot \log^2m}$.
As discussed above, $O_i \cap S_{\sigma(i)}$ is a random subset of $O_i$ where each item appears independently with probability $1/t$.
By Lemma \ref{lemma:subadditive-lower-tail}, 
 $\Pr[v_i(O_i \cap S_{\sigma(i)}) < \frac {v_i(O_i)} {c \cdot t \cdot\log m}]<\frac 1 {m^2}$. By the union bound, since there are at most $n$ players that are not focused, the statement of the lemma holds with probability at least $1-\frac n {m^2}\geq 1 - \frac 1 m$
(where in the last inequality we use $n\leq m$ otherwise in any allocation at least one player gets no items and $OPT =0$).
\end{subproof}

We can now finish the proof of the lemma by considering the following allocation: Each player $i$ that is focused receives its significant item $s_i$. Each player $i$ that is not focused receives the bundle $S_{\sigma(i)}$. By construction, these are disjoint sets and hence this is a valid allocation. With probability at least $1-1/m$, every player $i$ receives value at least $\frac{v_i(O_i)}{c \cdot t \cdot \log^2 m}$, either from their bundle or their significant item.
Hence, the allocation provides an approximation factor of $\frac{1} {2c \cdot n \log^2 m}$.
\end{proof}

To conclude, we note that the mechanism achieves simultaneously an approximation factor of $\frac{n}{m}$ and $\frac{1}{2c n \log^2 m}$. 
Hence, we have a factor at least $\max \{\frac n m, \frac{1}{n \log^2 m} \} \geq \frac{1}{m^{1/2} \log m}$ for all possible values of $m,n$ (the worst case being $n = \frac{m^{1/2}}{\log m}$).

\subsection{An Impossibility Result for MIR Mechanisms with Additive Valuations}\label{subsec-hardness}

We now prove that the Nash Social Welfare cannot be maximized to within a factor better than $1/n$ by a maximal-in-range mechanism in polynomial time. The proof is composed of two steps. In the first step we show that if the range of some mechanism contains many allocations then there is a relatively large subset of the items $S$ and a set of players $T$ such that projecting the set of all allocations of the mechanism on the subset $S$, we get all possible allocations of the items in $S$ to the players in $T$. The proof relies on similar lemmas that were obtained to prove the limits of polynomial time maximal-in-range mechanisms for maximizing the social welfare. We then use these sets $S$, $T$ to show that the maximal-in-range mechanism must solve exactly the problem of maximizing the NSW for two players with additive valuations, which is NP-hard.

\begin{theorem}
Fix a constant $\eps>0$ and $n \geq 2$. There is no polynomial-time maximal-in-range mechanism for $n$ players with additive valuations that provides a $(\frac{1}{n}+\eps)$-approximation to the Nash social welfare, unless $NP\subseteq P/poly$.
\end{theorem}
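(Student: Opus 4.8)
The plan is to combine a structural ``large subcube in the range'' argument with a reduction from an NP-hard partition-type problem, closely following the template of the known lower bounds for polynomial-time maximal-in-range mechanisms for social welfare (e.g.\ \cite{DN07a,BDFK10}). First I would exploit the hypothesis that $\cM$ guarantees a $(\frac1n+\eps)$-approximation: this forces the range to be ``rich'' in the sense that it cannot be a small set of allocations, since a small range cannot even come close to optimizing NSW on adversarially chosen additive instances. Quantitatively, I would argue that the range must contain at least $2^{\Omega(m)}$ allocations (or more precisely enough allocations that a suitable VC-type / sunflower-type argument applies). The reason is that for any fixed family $R$ of allocations of size $2^{o(m)}$, one can plant an additive instance whose unique near-optimal-NSW allocation (splitting the items so every player gets roughly equal value) is not approximated by anything in $R$; here one uses that NSW is extremely sensitive --- missing the right split by even one ``heavy'' item drops the geometric mean by a large factor, exactly as in the two-player $\{a,b\}$ example in the introduction.

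Second, I would invoke the combinatorial lemma (the analog of the lemmas used for social-welfare MIR hardness) that says: if the range $R$, viewed as a collection of functions $M \to N$, is large, then there is a set $S \subseteq M$ with $|S| = \omega(1)$ (in fact $|S|$ polynomially related to $m$ after we've set $n$ constant) and a pair of players $T = \{1,2\}$ such that the projection of $R$ onto $S$ contains \emph{every} function from $S$ to $T$ --- i.e.\ $R$ restricted to $S$ shatters $S$ with respect to the two players $1,2$, with all other items and players held at some fixed partial assignment. This is the step I expect to be the main obstacle: one must carefully pass from ``$R$ is large'' to ``$R$ contains a full combinatorial subcube on two players,'' controlling how the remaining $m - |S|$ items are allocated and making sure the fixed part is consistent with a legal allocation; the accounting for NSW (as opposed to additive welfare) needs care because we need the fixed part to give every other player strictly positive value so that the NSW on the subcube is governed entirely by players $1$ and $2$.

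Third, given such $S$ and $T=\{1,2\}$, I would finish with the reduction. Take an instance of a weakly NP-hard optimization problem --- maximizing $v_1(S_1)\cdot v_2(S_2)$ over partitions $S_1 \sqcup S_2 = S$ with $v_1,v_2$ additive, which is NP-hard (it is essentially the PARTITION / SUBSET-SUM maximization of a product, standard). Embed this instance by setting players $1$ and $2$ to have these additive valuations on the items of $S$ (and value $0$, or a tiny uniform value, elsewhere so they only ever want items in $S$), and give every other player $i \geq 3$ an additive valuation that makes their fixed bundle in the subcube worth a fixed positive constant and makes no other allocation attractive for them (so that any allocation $\cM$ outputs restricted to $S,T$ lies in the shattered subcube). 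Because $\cM$ is maximal-in-range and restricted to this subcube the NSW is a monotone function of $v_1(S_1)\cdot v_2(S_2)$, the allocation $\cM$ outputs must maximize $v_1(S_1)\cdot v_2(S_2)$ exactly over all partitions of $S$ --- a $(\frac1n+\eps)$-approximation is not enough slack here because within the subcube we are asking for the \emph{exact} optimum of the two-player product (the $(\frac1n+\eps)$ guarantee is used only to force the subcube to exist; once inside it, any sub-optimal partition of $S$ is ruled out by a separate gap construction, or alternatively by noting the reduction can be made to the exact problem). Hence a polynomial-time $\cM$ would solve an NP-hard problem in polynomial time with polynomial advice (the advice being the description of the shattered set $S$ and the fixed partial allocation, which depends only on $m,n$ and not on the input), giving $NP \subseteq P/\poly$.

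\textbf{Remark on where the difficulty concentrates.} The genuinely delicate part is the middle step: turning the approximation guarantee into the existence of a large two-player subcube inside the range, and simultaneously ensuring the ``background'' players can be pinned to positive values without themselves creating a better allocation for $\cM$ to escape to. The NSW objective makes both the forcing direction (a missed split is catastrophic for the geometric mean) and the reduction direction (need all players positive for NSW $>0$) a bit different from the additive-welfare proofs, but the combinatorial core --- extracting a shattered coordinate-subset from a large range --- should go through essentially verbatim from the cited social-welfare MIR lower bounds.
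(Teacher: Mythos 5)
Your overall template matches the paper's: use the $(\frac1n+\eps)$-guarantee on random (balanced) additive instances to force, via the BDFK10-style covering/shattering lemma, a set $S$ with $|S|\geq \delta m$ and two players $T=\{1,2\}$ such that the range projected onto $S$ contains all of $T^S$, and then reduce from the NP-hard two-additive-player product-maximization problem, with the advice ($S$, $T$, and range-dependent data) giving $NP\subseteq P/\poly$. (One small point: no ``separate gap construction'' is ever needed inside the subcube --- a maximal-in-range mechanism optimizes exactly over its range, which is all the paper uses.)

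The genuine gap is in how you handle the players outside $T$ and the items outside $S$. The shattering lemma only says that for every pattern $h:S\to T$ \emph{some} allocation in $\cR$ restricts to $h$ on $S$; the allocations realizing different patterns may assign the items of $M\setminus S$ to players $3,\ldots,n$ in completely different ways, and there is no ``fixed partial assignment'' or ``fixed bundle in the subcube'' for the background players. Your reduction, however, needs exactly that: you propose to give each player $i\geq 3$ an additive valuation making ``their fixed bundle'' worth a positive constant and nothing else attractive, so that the NSW on range elements is governed by players $1,2$. Since such fixed bundles need not exist, this step fails: with any single choice of background valuations, the mechanism may be driven to range allocations whose restriction to $(S,T)$ is a suboptimal split (because the background players' values vary across extensions), or it may zero out a background player entirely. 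The paper's fix is a different device: each extra player is given a valuation that values exactly one item $j\notin S$ (value $1$, all else $0$), and the reduction enumerates over all $O(m^{n-2})$ combinations of these singleton choices --- polynomial for constant $n$ --- running $\cM$ on each and taking the best outcome. Because the extra players then contribute only a $0/1$ factor to the NSW, every run's optimum is at most the two-player optimum, and in the run whose singletons match a range allocation realizing the optimal split (with nonempty bundles for the extra players) the optimum is attained exactly, so the optimal partition of $S$ can be read off. Without this enumeration trick (or an equivalent way to neutralize the non-fixed background), your third step does not go through.
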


\begin{definition}
Let $\cR$ be a set of allocations. We say that $\cR$ contains an $(S,T)$-shattering if the range $\cR$ restricted to the set of items $S$, contains $T^S$, i.e.~ all possible allocations of items in $S$ to players in $T$.
\end{definition}

The next lemma follows from a similar lemma in \cite{BDFK10}.

\begin{lemma}
\label{lemma:shattering}
Suppose that $|M|=m$, $|N|=n$, $\eps \in (0,\frac14)$, $m \geq \frac{2}{\eps^2} n \log (2n)$. Then there exists $\delta = \delta(n,\epsilon)>0$ such that if $\cM$ is a maximal-in-range mechanism with range $\cR \subseteq N^M$ that provides a $(\frac{1}{n} + \eps)$-approximation in terms of Nash social welfare for additive valuations, then there exists a subset of items $S\subseteq M$, $|S| \geq \delta m$, and a subset of the players $T \subseteq N$, $|T|\geq 2$, such that $\cR$ contains an $(S,T)$-shattering.
\end{lemma}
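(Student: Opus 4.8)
The plan is to mimic the classical shattering arguments used for maximal-in-range social welfare mechanisms (as in \cite{BDFK10,DSS15}), adapting the combinatorial core so that it speaks about the Nash Social Welfare rather than the sum of values. The starting observation is that if $\cR \subseteq N^M$ is the range of the mechanism and it provides a $(\frac1n+\eps)$-approximation to the NSW on additive valuations, then $\cR$ must be ``rich'' in the following sense: for essentially every way of splitting the items into two designated blocks and handing the remaining items out as dummy goods, there has to be an allocation in $\cR$ that respects that split, since otherwise one can cook up an additive instance (put all the value on one block for one player and on the complementary block for another, give negligible value to everyone else via a ``null''-type item or a tiny value on dummies) whose optimal NSW is achieved only by a split-respecting allocation, and every allocation in $\cR$ is off by more than an $n$ factor. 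So the first step is to set up this family of ``hard'' additive instances and verify that a $(\frac1n+\eps)$-approximate MIR mechanism cannot ignore any split that is ``balanced enough,'' quantifying ``balanced enough'' in terms of $\eps$.

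The second step is the counting/probabilistic argument that converts this richness into an honest shattering. Here I would follow the VC-dimension style double-counting of \cite{BDFK10}: consider a uniformly random bipartition of $M$ into two halves (or more generally, a random assignment of items to the $n$ players used as a ``background'' allocation), and look at which pairs $(i,i')$ of players and which large subsets $S$ of items have the property that $\cR$ restricted to $S$ contains all of $\{i,i'\}^S$. Using that $m \geq \frac{2}{\eps^2} n\log(2n)$, a Chernoff bound shows that a random bipartition is ``balanced'' for most pairs of coordinates simultaneously, and then a pigeonhole/averaging over the (at most $\binom{n}{2}$) pairs of players produces a single pair $T=\{i,i'\}$ and a set $S$ with $|S|\geq \delta m$ for some $\delta=\delta(n,\eps)>0$ on which $\cR$ shatters. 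The dependence $|S|\geq \delta m$ with $\delta$ depending only on $n$ and $\eps$ comes out of this averaging exactly as in the social-welfare analogue; the role of the hypothesis $m \geq \frac{2}{\eps^2} n \log(2n)$ is precisely to make the Chernoff estimate beat the union bound over the $n$ coordinates (or the $2n$ in $2\log(2n)$) so that the ``balanced bipartition'' event is nonempty and in fact typical.

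The main obstacle, and the place where this differs from the existing social-welfare lemmas, is carrying the approximation guarantee through the geometric mean rather than the sum. For social welfare, one isolates two players and hides the rest by giving them value-$0$ dummy items, so their contribution simply drops out of the sum; for NSW a player with value $0$ makes the whole product vanish, so the ``hidden'' players cannot be zeroed out. The fix is to give each hidden player a private item worth $1$ (or a fixed constant) that is of value $0$ to everyone else, so that in any allocation in the range that is going to be competitive these private items must go to their owners, and then the NSW factorizes (up to the fixed constants) as (contribution of player $i$)$\times$(contribution of player $i'$) on the two active blocks. This reduces the approximation statement for NSW on $n$ players to a two-player product, i.e.\ to exactly the quantity $v_i(S_i)\cdot v_{i'}(S_{i'})$ over split-respecting allocations, and from there the bound that a non-split-respecting allocation loses a factor worse than $n$ is an arithmetic manipulation of $(\frac1n+\eps)^n$ versus the max-over-splits product. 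Once this ``factorization gadget'' is in place, the remaining steps are the standard ones and the statement follows.
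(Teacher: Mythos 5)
Your first step contains the gap that sinks the argument: a $(\frac{1}{n}+\eps)$-approximation to the NSW does \emph{not} force the range to contain, for a given split $(A,B)$, an allocation that exactly respects that split. With your own gadget (player $1$ additive on $A$, player $2$ additive on $B$, the remaining players hidden behind private unit-value items), the optimum is $(|A|\,|B|)^{1/n}$, and the guarantee only yields some allocation $(S_1,\ldots,S_n)$ in the range with $v_1(S_1)\cdot v_2(S_2) \geq (\frac{1}{n}+\eps)^n |A|\,|B|$; since $(\frac{1}{n}+\eps)^n$ is tiny, an allocation giving player $1$ a small fraction of $A$ and player $2$ a small fraction of $B$ already qualifies, so no allocation in the range need ``respect'' the split at all. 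Consequently the ``richness'' (exact split coverage for essentially all balanced splits) that your second step takes as input is never established -- and if it were, you would already essentially have the shattering, making the subsequent counting superfluous. The genuine difficulty is converting weak, fractional agreement into exact shattering on a $\delta$-fraction of the items, and that cannot be done split by split.

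The paper's proof handles this differently: it draws a uniformly random $f:M\to N$ (each item desired, with value $1$, by a single random player), uses the hypothesis $m \geq \frac{2}{\eps^2} n\log(2n)$ with a Chernoff bound and a union bound over the $n$ players to get that with probability at least $\frac12$ the instance is $\eps$-balanced ($|f^{-1}(i)| \geq (1-\eps)\frac{m}{n}$ for all $i$), and then applies the AM--GM inequality to the NSW guarantee to conclude that the range contains an allocation agreeing with $f$ on at least $(\frac{1}{n}+\frac{\eps}{2})m$ items. In other words, the range (encoded as functions $M\to N\cup\{*\}$) is a covering code at Hamming radius $(1-\frac{1}{n}-\frac{\eps}{2})m$ for at least half of all functions $f$, and Lemma~4.4 of \cite{BDFK10} is exactly the tool that converts this covering property into an $(S,T)$-shattering with $|S|\geq\delta m$ and $|T|\geq 2$. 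Note also that the zero-product issue you worry about is handled automatically in this construction (the approximation guarantee forces every player to get positive value, and AM--GM does the rest), so no private-item gadget is needed for this lemma; a gadget of the kind you describe is what the paper uses later, in the reduction lemma that follows, not here.
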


\begin{proof}
Consider a uniformly random function $f:M \to N$. We interpret $f$ as an instance of Nash Social Welfare, where $f^{-1}(i)$ is the set of items in which player $i$ is interested, and her (additive) valuation is $v_i(S) = |S \cap f^{-1}(i)|$. 
We assume that $\cM$ provides a $c$-approximation, hence there must be an allocation in the range $(S_1,\ldots,S_n) \in \cR$ such that
\begin{equation}
\label{eq:c-approx}
 \left( \prod_{i=1}^{n} v_i(S_i) \right)^{1/n} = \left( \prod_{i=1}^{n} |S_i \cap f^{-1}(i)| \right)^{1/n}
 \geq \left( \frac{1}{n} + \eps \right) \cdot OPT = \left( \frac{1}{n} + \eps \right) \cdot \left( \prod_{i=1}^{n} |f^{-1}(i)| \right)^{1/n}
\end{equation}
This holds for every instance $f$, but we are particularly interested in those instances where the sets $f^{-1}(i)$ are ``$\eps$-balanced'': $|f^{-1}(i)| \geq (1-\eps) \frac{m}{n}$ for every $i$. By Chernoff bounds, this happens for a uniformly random $f$ with probability at least $1/2$:
For each $i$, $|f^{-1}(i)|$ is a summation of independent 0/1 random variables with expectation $\mu = m/n$, and by the Chernoff bound, $\Pr[|f^{-1}(i)| < (1-\eps) \mu] < e^{-\eps^2 \mu / 2} \leq e^{-\log (2n)} = 1/(2n)$ by the assumptions of the lemma.

Assuming that $f$ is $\eps$-balanced, (\ref{eq:c-approx}) implies the following: By the AM-GM inequality,
\begin{align}
\label{eq:sum-bound}
\frac{1}{n} \sum_{i=1}^{n} |S_i \cap f^{-1}(i)| &\geq  \left( \prod_{i=1}^{n} |S_i \cap f^{-1}(i)| \right)^{1/n}
 \geq \left( \frac{1}{n} + \eps \right) \cdot \left( \prod_{i=1}^{n} |f^{-1}(i)| \right)^{1/n}\nonumber \\
  &\geq   \left( \frac{1}{n} + \eps \right) (1-\eps) \frac{m}{n} > \left( \frac{1}{n} + \frac{\eps}{2} \right) \frac{m}{n}.
\end{align}

Let us define a function $g:M \to N \cup \{*\}$ encoding $(S_1,\ldots,S_n)$: $g(j) = i$ if $j \in S_i$ and $g(j) = *$ if $j$ is not contained $\bigcup_{i=1}^{n} S_i$. Equation (\ref{eq:sum-bound}) can be interpreted as saying that $f$ and $g$ agree on at least $\left( \frac{1}{n} + \frac{\eps}{2} \right) m$ coordinates, i.e. the Hamming distance between $f$ and $g$ is at most $(1 - \frac{1}{n} - \frac{\eps}{2}) m$. Now we refer to Lemma 4.4 in \cite{BDFK10}, with $U=M$, $V=N$, $\gamma = 1/2$, $q=2$:  The lemma concludes that for some $\delta(n,\epsilon)>0$ there is a set of items $S \subseteq M$ and a set of players $T \subseteq N$ such that $|S| \geq \delta |M|$, $|T| \geq 2$, and $\cR$ contains an $(S,T)$-shattering.
\end{proof}

\begin{lemma}
Let $\cM$ be a maximal-in-range mechanism for $n$ players, for some constant $n \geq 2$. Suppose for some fixed $\delta>0$ and every $m$, the range of $\cM$ contains an $(S,T)$ shattering, for $|S| \geq \delta  m$ and $|T| \geq 2$. Then, $\cM$ does not run in polynomial time, unless $NP\subseteq P/poly$.
\end{lemma}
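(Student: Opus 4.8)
The plan is to use the $(S,T)$-shattering to embed an NP-hard optimization problem---maximizing the Nash Social Welfare of two players with additive valuations, which we show is NP-hard by a reduction from \textsc{Partition}---into the behavior of $\cM$ on instances of size $m$, thereby forcing $\cM$ to solve this NP-hard problem exactly whenever it runs. Since $|S| \geq \delta m$ grows linearly, a polynomial-time $\cM$ would give a polynomial-time (in fact, via the non-uniform advice encoding $\cR$, a $P/\poly$) algorithm for an NP-hard problem, contradicting the assumption $NP \not\subseteq P/\poly$.

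First I would establish that computing the NSW-maximizing allocation of $k$ items to two additive players is NP-hard. Given a \textsc{Partition} instance with positive integers $a_1,\ldots,a_k$ summing to $2A$, set both players' valuations to be additive with $v_1(\{j\}) = v_2(\{j\}) = a_j$. Any allocation $(S_1,S_2)$ has NSW $\sqrt{v_1(S_1) v_2(S_2)} = \sqrt{v_1(S_1)(2A - v_1(S_1))}$, which is maximized exactly when $v_1(S_1) = A$, i.e.\ when a perfect partition exists; moreover the optimum value reveals whether a perfect partition exists. So an exact NSW maximizer for two additive players solves \textsc{Partition}.

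Next I would show how the shattering lets us run this hard instance ``inside'' $\cM$. Pick two players $i_1, i_2 \in T$ (possible since $|T|\geq 2$). Given a \textsc{Partition} instance of size $k$, we choose $m$ large enough that $\delta m \geq k$ (and large enough that Lemma~\ref{lemma:shattering}'s hypothesis holds), obtain $S$ with $|S|\geq k$, and identify $k$ of the items of $S$ with the \textsc{Partition} elements. We construct valuations on the full item set $M$ as follows: players $i_1, i_2$ value the $k$ chosen items of $S$ according to $a_j$ (scaled to be positive if needed), value all other items at $0$, and all players other than $i_1,i_2$ have a constant high value for every single item (say value $1$ on every item, so their presence does not interfere). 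Because $\cR$ contains an $(S,T)$-shattering, the range includes, for every partition of the $k$ chosen items between $i_1$ and $i_2$, an allocation agreeing with that partition on those items; the maximal-in-range mechanism will pick the NSW-best allocation, and its restriction to the two players $i_1,i_2$ on the $k$ items must be an optimal partition (the other items and other players contribute fixed factors to the NSW product, so optimizing overall NSW is equivalent to optimizing $\sqrt{v_{i_1}(S_{i_1}) v_{i_2}(S_{i_2})}$ over the shattered coordinates). Reading off $\cM$'s output solves \textsc{Partition}.

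The main obstacle---and the step requiring care---is the interference from the other players and the non-shattered items. The shattering only controls what happens on $S$ restricted to $T$; on $M \setminus S$ and for players outside $T$, the range allocations may be arbitrary and vary across the shattered configurations, so the ``fixed factors'' claim above is not immediate. The fix is to give the players outside $\{i_1,i_2\}$ valuations that force their NSW contribution to be essentially constant regardless of which items they receive (e.g.\ additive with value $1$ per item and enough items available, or more robustly, make every player outside $T$ indifferent), and to give items outside $S$ value $0$ to $i_1, i_2$, so that the NSW decomposes as (a quantity depending only on the $S$-restriction to $i_1,i_2$) times (a quantity the mechanism will separately maximize but which does not affect the $i_1,i_2$ partition). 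One must check this decomposition carefully, and also handle the technical point that we are using $\cR$ as polynomial-size non-uniform advice (the set $S$, $T$, and the shattering witness depend on $m$ but not on the \textsc{Partition} instance), which is exactly why the conclusion is stated as $NP \subseteq P/\poly$ rather than $P = NP$.
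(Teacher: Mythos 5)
Your overall plan coincides with the paper's: reduce from the NP-hardness of exact NSW maximization for two additive players (the paper uses Subset Sum, you use Partition; both are fine), embed that instance on $m'\leq \delta m$ shattered items for two players of $T$, and note that the advice needed (the shattering, which depends only on $m$) is what makes the conclusion $NP\subseteq P/\poly$ rather than $P=NP$. However, the step you yourself flag as ``requiring care'' --- neutralizing the $n-2$ players outside $T$ --- is exactly where your proposal breaks, and the fixes you sketch do not work. Giving each extra player an additive valuation with value $1$ on every item makes her factor in the NSW product equal to the \emph{number} of items she receives; since the shattering controls the range allocations only on $S$ restricted to $T$, the allocation realizing the optimal partition on the $k$ embedded items may give some extra player nothing (NSW $0$) or few items, while a range allocation inducing a badly suboptimal partition may give the extra players many items and hence a much larger product. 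The partition gap you need to detect can be as small as a factor $A^2$ versus $A^2-1$, so these uncontrolled factors completely swamp it, and reading off $\cM$'s output need not solve Partition. The alternative fix, ``make every player outside $T$ indifferent,'' is not available either: an additive valuation with $v(\emptyset)=0$ that is indifferent over bundles is identically zero, which forces every allocation to have NSW $0$ and lets the maximal-in-range mechanism answer arbitrarily.

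The paper resolves this differently, and this idea is missing from your proposal. Each extra player $i>2$ is given a \emph{single-minded} additive valuation $v_i^j$ with value $1$ on one item $j$ outside the embedded set and $0$ elsewhere, and the reduction \emph{enumerates} all $O(m^{n-2})$ choices of such valuations (polynomially many for constant $n$), running $\cM$ once per choice and taking the best outcome. Then (i) in every iteration the extra players contribute a factor in $\{0,1\}$, so the constructed instance's optimum never exceeds the two-player optimum, and (ii) for the range allocation whose restriction to $S$ realizes the optimal two-player partition, each extra player receives some nonempty set disjoint from $S$ (allocations leaving a player empty-handed can be discarded since their NSW is always $0$), so in the iteration where each extra player's desired item lies in her assigned set, the two-player optimum is attained exactly. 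This best-of-all-iterations argument is what makes the ``fixed factors'' intuition rigorous; without it, or some substitute for it, your reduction does not go through.
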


\begin{proof}
We use $\cM$ to exactly solve the problem of maximizing the NSW with a set of $2$ additive players and $m'$ items. Recall that this problem is NP-hard (by a simple reduction from the NP-complete Subset Sum problem). Let $(v'_1,v'_2)$ be such an instance on $m'$ items.

Given $m'$, we consider instances of NSW with $n$ players and $m = \lceil m' / \delta \rceil$ items. Our assumed MIR mechanism with these parameters contains an $(S,T)$-shattering where $|S| \geq \delta m \geq m'$ and $|T| \geq 2$. We can actually assume that $|T|=2$, which is implied by any shattering with a larger $T$. Let us also assume for convenience that $S = [m'] = \{1,2,\ldots, m'\}$ and $T = \{1,2\}$. 

Given the valuations $v'_1,v'_2$ on $[m']$, we define additive valuations $v_1,v_2$ on $[m]$ as follows: $v_i(j)=v'_i(j)$, for $1\leq j\leq m'$, and $v_i(j)=0$ otherwise. To extend this to an instance with $n$ players, we need to define the valuations of players $3,\ldots, n$. For each such player $i$, we will have a set of $(1-\eps) m$ permissible valuations $v^j_i$: For every $m' < j \leq m$, we set $v^j_i(\{j\})=1$ and $v^j_i(\{j'\})=0$, if $j'\neq j$.

Note that the set of permissible valuations for each extra player is of size $(1-\eps) m$ and thus there are $((1-\eps)m)^{n-2}$ instances that may be obtained by each extra player having one valuation from its permissible set. 
We run $\cM$ on all such instances and choose an allocation that maximizes the NSW. Thus, the total running time of the reduction is $O(m^{n-2})$ times the running time of the mechanism $\cM$, which is polynomial assuming that $n$ is constant and the running time of $\cM$ is polynomial as well. 
The correctness of the reduction follows from the following claim.

\begin{claim*}
In all iterations of the reduction, the optimal NSW of the constructed instance is at most the optimal NSW of the original instance. Further, there exists an iteration of the reduction in which the optimal NSW equals the optimal NSW in the original instance. 
\end{claim*}
\begin{subproof}
Consider some iteration of the reduction, where the choice of the valuation of each player $i$, $n\geq i>2$, is $v_i^j$. Consider some allocations of the items $(S_1,\ldots, S_n)$. Note that $\prod_{i=3}^{n} v_i^j(S_i)\in \{0,1\}$, by construction. Also note that $\prod_{i=1}^{2} v_i(S_i)=\prod_{i=1}^{2} v_i(S_i\cap [m'])=\prod_{i=1}^{2} v'_i(S_i\cap [m'])$. We thus have that $\prod_{i=1}^{n} v_i(S_i) \leq \prod_{i=1}^{2} v'_i(S_i\cap [m'])$.  Note that $(S_1\cap [m'], \ldots, S_n \cap [m'])$ is an allocation of the items in $[m']$. Hence, the first part of the claim follows.

We now prove the second part of the claim. Let $(O_1,O_2)$ be an optimal allocation of items in $[m']$ to players in $T=\{1,2\}$. Note that there since the range has an $(S,T)$-shattering with $S = [m'], T=\{1,2\}$, there is an allocation $(S_1,\ldots, S_n)$ such that $\prod_{i=1}^{2} v_i(S_i \cap [m']) = \prod_{i=1}^{2}{v'_i}(O_i)$. We can assume that for each $i > 2$, $S_i \neq \emptyset$, otherwise the NSW of $(S_1,\ldots, S_n)$ is always $0$ and thus we can assume that this allocation is not in the range of the algorithm in the first place. Consider the iteration where the valuation of each player $i>2$ is $v_i^j$, for some $j \in S_i$. We have that $\prod_{i=3}^{n} v_i^j(S_i)=1$. In total we get that $\prod_{i=1}^{n} v_i(S_i) = \prod_{i=1}^{2} v'_i(O_i)$, as needed.
\end{subproof}
\end{proof}

To summarize, the maximum NSW over all allocations in the range would be one whose value is exactly $v'_1(O_1)\cdot  v'_2(O_2)$ and we could also find $O_1, O_2$ (or another allocation of equal NSW value) by restricting the output of our mechanism to the first two players. Hence, we would be able to solve an NP-hard problem for every given input size (possibly non-uniformly, hence the conclusion is $NP \subset P/poly$).

\section{Conclusion and Future Directions}

In this work we design incentive compatible mechanisms that maximize the Nash Social Welfare by considering a novel percentage fee model. Our work leaves a number of open questions. At the most immediate level, can we obtain an approximation ratio of $m^{\frac 1 2}$ not just with a polynomial number of value queries but also in polynomial time? In addition, our hardness result applies only to maximal-in-range mechanisms; is it possible to prove that obtaining an approximation ratio better than $m^\frac 1 2$ for {\em any incentive-compatible mechanism} in our model is computationally hard, or requires an exponential number of queries? Such impossibilities are known in the traditional model \cite{D11,DV11,DV12} but we do not know how to obtain analogous results for approximating the NSW in the percentage fee model.

In addition, all of our bounds use simple value queries. We do not know whether more complicated queries, e.g., demand queries, can help obtain better approximation ratios.

Finally, in this paper we have demonstrated how different payment schemes enable the implementation of useful social choice functions. Is it always possible to characterize the set of implementable social choice functions as a function of the payment method? Specifically, what can be implemented if the designer is allowed to offer, for each alternative, \emph{either} a fixed fee or a percentage fee? It will also be very interesting to understand whether there are other natural payment schemes that enable the incentive-compatible implementation of different fairness notions.

\printbibliography    
\end{document}